\providecommand{\tabularnewline}{\\}
 \definecolor{BLACK}{gray}{0}
 \definecolor{WHITE}{gray}{1}
 \definecolor{RED}{rgb}{1,0,0}
 \definecolor{GREEN}{rgb}{0,1,0}
 \definecolor{BLUE}{rgb}{0,0,1}
 \definecolor{CYAN}{cmyk}{1,0,0,0}
 \definecolor{MAGENTA}{cmyk}{0,1,0,0}
 \definecolor{YELLOW}{cmyk}{0,0,1,0}
  \theoremstyle{definition}
  \newtheorem{defn}{\protect\definitionname}
  \theoremstyle{plain}
  \newtheorem{lem}{\protect\lemmaname}
  \theoremstyle{plain}
  \newtheorem{cor}{\protect\corollaryname}
\theoremstyle{plain}
\newtheorem{thm}{\protect\theoremname}
  \providecommand{\definitionname}{Definition}
  \providecommand{\lemmaname}{Lemma}
\providecommand{\corollaryname}{Corollary}
\providecommand{\theoremname}{Theorem}
\providecommand{\definitionname}{Definition}
  \providecommand{\lemmaname}{Lemma}
\providecommand{\corollaryname}{Corollary}
\providecommand{\theoremname}{Theorem}
  \providecommand{\definitionname}{Definition}
  \providecommand{\lemmaname}{Lemma}
\providecommand{\corollaryname}{Corollary}
\providecommand{\theoremname}{Theorem}
\begin{document}
\global\long\def\bm{M}

\global\long\def\BM{\mathcal{M}}

\global\long\def\vb{E}

\global\long\def\VB{\mathcal{E}}

\global\long\def\fb{V}

\global\long\def\FB{\mathcal{V}}

\global\long\def\cd#1{D_{#1}}

\global\long\def\CD#1{\mathcal{D}_{#1}}

\global\long\def\jb{Jb\left(E\right)}

\global\long\def\js{J}

\global\long\def\JB{\mathcal{J}b\left(\mathcal{E}\right)}

\global\long\def\JS{\mathcal{J}}

\global\long\def\dom#1{\mathfrak{D}^{#1}}

\global\long\def\d#1{\text{d}_{#1}}

\global\long\def\vol{\text{vol}}

\global\long\def\VOL{\text{Vol}}

\global\long\def\mult{\text{\ensuremath{\mathfrak{M}}}}

\global\long\def\ef{\text{\ensuremath{\vec{\phi}}}}

\title{Generalised conservation laws in non-local field theories}

\author{\textbf{Alexander Kegeles$^{1,2}$} and \textbf{Daniele Oriti$^{1}$}}

\address{\vspace{0.5cm}
 $^{1}$ Max Planck Institute for Gravitational Physics (Albert Einstein
Institute), \\
 Am Mühlenberg 1, 14476 Potsdam-Golm, Germany, EU \\
 alexander.kegeles@aei.mpg.de, daniele.oriti@aei.mpg.de \\
 \\
 $^{2}$University of Potsdam Institute of Physics and Astronomy\\
 Karl-Liebknecht-Strasse 24/25, 14476 Potsdam-Golm, Germany, EU\\
 \\
 }
\begin{abstract}
We propose a geometrical treatment of symmetries in non-local field
theories, where the non-locality is due to a lack of identification
of field arguments in the action. We show that the existence of a
symmetry of the action leads to a generalised conservation law, in
which the usual conserved current acquires an additional non-local
correction term, obtaining a generalisation of the standard Noether
theorem. We illustrate the general formalism by discussing the specific
physical example of complex scalar field theory of the type describing
the hydrodynamic approximation of Bose-Einstein condensates. We expect
our analysis and results to be of particular interest for the group
field theory formulation of quantum gravity. 
\end{abstract}
\maketitle

\section{Introduction}

The notion of \textit{symmetry} is at the very foundation of our description
of physical systems, in particular classical and quantum field theories.
Symmetries characterise their physical content and interpretation.
Their particle content is defined in terms of representations of the
Poincaré group, and their classical and quantum dynamics is characterised
e.g. by restrictions on the allowed field interactions. Symmetries
are also one of the main tools to characterise their macroscopic properties,
since different macroscopic phases of a given physical system are
often efficiently characterised by symmetries and their (spontaneous)
breaking. In fact, conservation laws following from symmetries may
encode basically the whole dynamical content of many macroscopic systems,
hydrodynamics being the obvious example. The corresponding universality
class is largely independent of the microscopic details of the system,
while the corresponding conserved charges identify key physical quantities.
In local (quantum) field theories the connection between symmetries
of the action and physically measurable quantities like conserved
currents and charges is very well understood. This relation is given
by the Noether theorem \cite{Noether:1971aa}, that allows to identify
and compute the conserved quantities and corresponding symmetry transformations.

The other main ingredient in our description of physical systems is
indeed the notion of locality, which again largely dictates the structure
of fundamental interactions and it is, both historically and conceptually,
at the root of the very notion of fields.

Still, the necessity to go beyond the framework of local field theories
has been voiced in several contexts. Non-local field theories are
routinely used in condensed matter theory and many-body quantum theory,
in the hydrodynamic approximation, e.g. in the theory of Bose-Einstein
condensates \cite{Leggett1,Leggett2} and even in relativistic theories
\cite{Kristensen:1952aa,Pauli:1953aa}. In this context, the field
theory fails to be local in that the typical interaction term depends
on more than a single spacetime point, i.e. the fields interact at
a distance. Formally, this implies that the Lagrangian of the theory
has a domain given by a direct product of several copies of the (spacetime)
manifold, on which fields are defined. However, the consequences of
this non-local structure are rarely investigated in detail, mainly
because such field theories are understood as the non-relativistic
approximation of truly local field theories, and thus their non-local
features are not considered of any fundamental significance.

More radical in scope are the proposals\footnote{Notice that most such models are non-local in the sense that they
involve an infinite number of field derivatives (see for example \cite{nonlocal2,nonlocal3,nonlocal4,nonlocal5,nonlocal6}).
It is unclear to us if there is a general connection between this
type of non-locality and the one we deal with in our analysis. At
any rate, we will not deal with this type of field theories.} for non-local field theory descriptions of cosmological phenomena
(see for example \cite{nonlocCosmo,nonlocCosmo2,nonlocCosmo3}) and
of semi-classical black hole physics (see for example \cite{nonlocBH,nonlocBH2,nonlocBH3}).
In all these cases the breakdown of strict locality, even in spacetime-based
physics, is not understood as an artifact of the approximation used,
but rather as the signal of an even more drastic departure from the
local description of fundamental interactions at a deeper level of
reality \cite{sorkin}. And locality is indeed the main feature of
standard spacetime physics that is affected by effective models of
quantum gravity, which suggest a basis for quantum gravity phenomenology.
Deformations of spacetime symmetries \cite{dsr,dsr2} and/or ideas
from non-commutative geometry \cite{NCG,NCG2} are just few examples
with fundamental non localities. In fact, several hints are accumulating
that a more fundamental description of quantum gravity will imply
a breakdown of the very notion of spacetime and its \lq dissolution\rq
~ at microscopic scales, only to emerge in a suitable continuum and
semiclassical approximation from the collective dynamics of more fundamental,
non-spatiotemporal degrees of freedom (see the discussion in \cite{emergence,emergence2,emergence3,emergence4},
and references therein). The more fundamental quantum gravity description,
in this perspective, will not be spacetime-based, by definition, and
thus will not be \lq local\rq~ in any usual sense. In itself this
does not imply that a field theory framework cannot be devised for
describing the fundamental degrees of freedom, even though it requires
a more abstract type of field theories which obviously are not defined
on a spacetime manifold. This more general class of field theories
have been developed and go under the name of \lq group field theories\rq~
\cite{GFT,GFT2,GFT3,GFT4,GFT5}.

Group field theories are quantum field theories defined on a group
manifold, not to be identified with spacetime, but rather defining
the dynamics of \lq quanta \textit{of} spacetime\rq. This formalism
lies at the point of convergence of several quantum gravity approaches,
being understood both as a second quantised reformulation of loop
quantum gravity \cite{GFT-LQG,GFT-LQG2,GFT-LQG3,LQG,LQG2} and as
a generalisation and enrichment of tensor models \cite{tensor,tensor2}
and simplicial gravity. Not being defined on spacetime, the standard
type of locality has no reason to be invoked for these field theories.
Indeed group field theories are non-local with respect to the group
manifold that represents the domain of the fundamental fields. In
fact, a peculiar non-local pairing of field arguments entering their
interactions is a key characterising feature of the formalism. Moreover,
a similar type of non-locality is present in a different type of field
theories which lead to non-linear extensions of quantum cosmology
\cite{martin}, where the wave function interacts with itself non-locally
on minisuperspace. The similarity is not entirely mysterious, given
that the latter type of effective cosmological dynamics emerges from
group field theories as the hydrodynamic approximation of the quantum
dynamics of condensate states \cite{GFTcosmo,GFTcosmo2,GFTcosmo3,GFTcosmo4,GFTcosmo5,GFTcosmo6,GFTcosmo7},
and in a way which is completely analogous to what happens in real
Bose-Einstein condensates. This type of quantum gravity models are
our main motivation for considering the issue of symmetries and conservation
laws in non-local field theories.

The issue is wide open. Indeed, in non-local field theories the standard
Noether theorem fails and a general relation between symmetries and
conservation laws is not known. For flat base manifolds and a special
kind of non-local systems the connection between symmetries and conservation
laws can be established \cite{Pauli:1953aa,Bloch:1950aa,Marnelius:1973aa,Huang:2012aa,Yukawa:1950aa}.
But a general treatment of the problem is lacking. And in the context
of group field theories, a first, partial analysis was presented in
\cite{josephNoether}.

Our goal, in this paper, is to provide a solid and general analysis
of the connection between symmetries and conservation laws in non-local
field theories, where the non-locality amounts to a lack of simple
identification of field arguments in the Lagrangian.

We use differential geometrical language, in which the Lagrangian
is seen as a function on many copies of a single first jet bundle,
but the geometry of the base manifold of a single bundle is left unspecified
\cite{Abraham:2008aa}. This allows for a very general approach to
the problem and leads to a result that can be applied to any non-local
theory on curved space time in which the Lagrangian depends on first
derivatives of the fields at most. This includes a large class of
field theories used in condensed matter systems, and our motivating
quantum gravity models, that is group field theories.

We derive the connection between symmetries and conserved quantities
for non-local theories and obtain a Noether-like conservation law
augmented by a \lq correction term\rq~ that results from the presence
of non-locality. The resulting continuity equation is our main result
and is what we call the \lq generalised conservation law\rq. The
result is rigorous and very general, for the class of theories it
applies to but also because it concerns any continuous symmetry. Of
course its main interest lies in the application to specific physical
examples. In order to clarify and illustrate in detail our formalism,
we present an explicit application of our result to the case of a
field theory with 2-body interaction, as the one describing, say,
the hydrodynamics of a Bose-Einstein condensate in the Gross-Pitaevskii
approximation. We derive the generalised conservation laws, and show
that the correction term admits an intuitive physical interpretation.
The detailed application of our results to group field theories, on
the other hand, will be presented elsewhere.

Our paper is structured as follows. An introduction of notation, conventions
and the geometrical space that we use throughout the paper is presented
in section \eqref{sec:Notation,-conventions-and}. In section \eqref{sec:Symmetries-and-continuity}
we provide the definition of the type of non-local action we are concerned
with, followed by the equations of motion and definition of symmetry
transformations. We derive our main result, that is the generalised
conservation law, in section \eqref{sec:Generalized-conservation-law}.
An application of our analysis to an explicit non-local field theory
is presented in section \eqref{sec:Physical-examples}. In the last
section we briefly review the derivation of Ward identities in the
functional integral formalism, and show the quantum counterpart of
our main result.

\section{Notation, conventions and basic definitions\label{sec:Notation,-conventions-and}}

In this section we introduce the notation, conventions and the main
definitions of the geometrical space defining the framework for our
analysis. In the concluding section of the paper, we will detail an
explicit example of a non-local field theory analyzed with the methods
developed in the bulk of the paper. We refer to the same example for
further clarification of our notation and definitions.

\subsection{Vector and jet bundles}

A vector bundle $\pi:\vb\to\bm$ with the projection $\pi$, the total
space $\vb$, the fiber $\fb$ and the base manifold $\bm$ is denoted
simply with $\vb$. We also assume that the base manifold $\bm$ is
orientable and is equipped with a (semi-)Riemannian metric. The covariant
derivative is denoted with $\cd{}$ and the subscript $\cd q$ denote
the point at which the derivative is taken.

Capital greek letters will denote sections on $\vb$. In a local trivialization
$U\subset M$, the section is given by $\Phi=\left\{ q,\phi\left(q\right)|q\in U\right\} $.
The correspondent functions $\phi:\bm\to\fb$ are called fields on
$\vb$ and are denoted with the correspondent lower case greek letter.
The space of smooth fields on $\vb$ is denoted by $\Gamma\left(E\right)$.
Additional restrictions on these sets, such as local fields around
$q\in M$, fields local in $U\subset M$ and compactly supported fields
in $U\subset M$, are denoted by $\Gamma_{p}\left(E\right),\,\Gamma_{U}\left(E\right)$
and $\Gamma_{U,C}\left(E\right)$, respectively.\footnote{Even in local field theories it is well known, that solutions to variational
problems might not be differentiable in general. In non local field
theories the situations is even more complicated. In this paper we
focus on the general procedure instead of treating the problem of
existents of solutions. For this reason we are not distinguishing
between smooth and $n$ times differentiable fields.}

The first jet bundle on $\vb$ is denoted $\jb$\footnote{For the convenience of the reader we briefly recall the notion of
the jet bundle. A first order jet bundle $\left(\jb,\pi_{j},\bm\right)$
over a vector bundle $\vb$ is a vector bundle which fibers are called
jet spaces $\js$. Let $\sim$ denote an equivalence relation on local
sections of $\vb$ where two local sections $\Phi,\Psi\in\Gamma_{p}\left(\vb\right)$
are called equivalent $\Phi\sim\Psi$ if 
\[
\phi\left(p\right)=\psi\left(p\right)\quad\text{and}\quad\cd p\phi=\cd p\psi\quad.
\]
The first order jet space at $p$ is a vector space of equivalence
classes $\js_{p}=\left\{ \left[\phi\right]_{p}\vert\phi\in\Gamma_{p}\left(\vb\right)\right\} $.
And an element of $\js_{p}$ can therefore be locally understood as
a triple 
\[
\Phi_{p}\oplus\cd p\Phi=\left(p,\phi\left(p\right),\cd p\phi\right)
\]
For this reason the jet space can be written as a set $\js_{q}=\left\{ \left(\phi\left(p\right),\cd p\phi\right)\vert\phi\in\Gamma_{p}\left(\vb\right)\right\} $.
The bundle projection $\pi_{j}:\jb\to\bm$ is then given by $\pi_{j}\left(\left(\phi\left(p\right),\cd p\phi\right)\right)=p.$
The first order jet space $J_{p}$ can be seen as a generalization
of the tangent space, where the equivalence class of curves is replaced
by the equivalence class of sections in $\vb.$ In our paper we often
refer to the component $\cd p\phi$ already as the jet space component
to distinguish it from the fiber element $\phi\left(p\right)$. }. Its fiber at point $q\in M$ is called the jet space $\jb_{q}=\js_{q}=\left\{ \left(\phi\left(p\right),\cd p\phi\right)\vert\phi\in\Gamma_{p}\left(\vb\right)\right\} $.
Sections on $\jb$ are denoted by $j\left(\Phi\right)$ with correspondent
fields $j\left(\phi\right)\simeq\cd{}\phi$ and are called prolonged
section and fields, respectively. In a local trivialization, $j\left(\Phi\right)=\left\{ \left(q,\phi\left(q\right),\cd q\phi\right)\vert q\in U\right\} $.

\subsection{Geometrical construction}

The base manifold of $\vb$ is the domain of the Lagrangian, itself
a functional of our physical fields. In local field theories it coincides
with the domain of the fields (and of their derivatives). For example,
in non-relativistic, Newtonian physics, and for a local field theory,
the base manifold is a product of time and space as $\bm=\mathbb{R}\times\mathbb{\mathbb{R}}^{3}$.
In a more general situation (e.g. non-spacetime based theories like
in group field theories), it can be a product of $N$ different manifolds
$\bm=M^{1}\times\cdots\times M^{N}$. In theories corresponding to
this more general situation, it is possible that the action and the
Lagrangian are local on some sub-manifolds $M^{i}$ and non-local
on the rest. Here by locality we mean that the arguments of the different
fields (and their derivatives) appearing in the Lagrangian, are identified
with one another when they refer to such sub-manifolds $M^{i}$. An
example of this situation is given again by non-relativistic field
theories as used in condensed matter theory, where usually the fields
appearing in the action differ by their spatial domain but are evaluated
at the same time. We call this distribution of non-localities the
\textit{combinatorial structure}, and say that a theory is \textit{trivially
non-local} if its Lagrangian depends on $N$ fully distinct points.
Local theories, in our setting, are those such that the fields appearing
in the Lagrangian have all their arguments identified, thus effectively
the Lagrangian depends on a single component $\bm$.

The underling jet bundle of a theory with trivial combinatorics is
a product of $N$ jet bundles over $\vb$. It is again a vector bundle
$\vb^{P}\simeq\vb^{\times N}$ with the base manifold $\bm^{P}\simeq\bm^{\times N}$
the fiber $\fb^{P}\simeq\fb^{\oplus N}$. The jet space at each point
is a direct sum of jet spaces in the same way as it is the fibers
$V^{P}$. That is for each $q\in\bm^{P}\simeq\left(q^{1},\cdots,q^{N}\right)$
$\js_{q}^{P}\simeq\js_{q^{1}}\oplus\cdots\oplus\js_{q^{N}}.$

For a theory with non trivial combinatorial structure the underling
base manifold $\BM$ is a submanifold of $\bm^{\times N}$. To be
able to study this case, we assume that $\BM$ can be isometrically
embedded in $\bm^{\times N}$ by an embedding $f$ that consists of
a combination of the following three maps: \\
 The identity 
\begin{eqnarray}
\mathds{1}: & M & \to M\\
 & q & \mapsto q,\nonumber 
\end{eqnarray}
the diagonal map 
\begin{eqnarray}
\text{Di}: & M & \to M\times M\\
 & q & \mapsto\left(q,q\right),\nonumber 
\end{eqnarray}
and the permutation map 
\begin{eqnarray}
\text{Per}: & M^{1}\times M^{2} & \to M^{2}\times M^{1}\\
 & \left(q^{1},q^{2}\right) & \mapsto\left(q^{2},q^{1}\right).\nonumber 
\end{eqnarray}
For example for the mentioned class of theories, which are local in
time and non-local in space, we get the embedding map $f:\mathbb{R}\times\mathbb{R}^{3}\times\mathbb{R}^{3}\hookrightarrow\mathbb{R}\times\mathbb{R}^{3}\times\mathbb{R}\times\mathbb{R}^{3}$
by setting 
\begin{equation}
f=\left(\mathds{1}\times\text{Per}\times\mathds{1}\right)\left(\text{Di}\times\mathds{1}\times\mathds{1}\right).\label{eq:an example for f}
\end{equation}
The space $\BM$ is therefore connected to $\bm^{P}$ by the embedding
$f$. We define our vector bundle $\VB$ for the general case as the
pull-back vector bundle of $\vb^{\times N}$. That is, 
\begin{equation}
\VB:=f^{*}\vb^{\times N}.
\end{equation}
The jet space at each point $q\in\BM$ is $\JS_{q}=\left\{ \CD q\ef=\left(\cd{q^{1}}\phi^{1},\cdots,\cd{q^{N}}\phi^{N}\right)\vert\phi\in\Gamma_{q}\left(E^{P}\right)\right\} $
where $\CD{}$ denotes the covariant derivative on $\BM$ given by
the pull back of $\cd{}^{\times N}$ and $q^{i}=\text{pr}^{i}\left(f\left(q\right)\right)$.

In cases where we need to use charts we denote the dimension of the
base manifold and the fiber $\dim\left(\bm\right)=\d{\bm}$ and $\dim\left(\fb\right)=\d{\fb}$,
respectively.

The canonical volume form on $M$ is denoted $\text{vol}$. In local
coordinates it has the usual expression $\text{\ensuremath{\vol}}=\sqrt{\vert g\vert}\text{d}x^{1}\wedge\cdots\text{d}x^{I}$.
The volume form on $\BM$ is denoted by $\VOL=f^{*}\vol^{\times N}$.
We will also sometimes sloppy denote $\vol_{q}$ ($\text{\ensuremath{\VOL}}_{\bar{q}}$)
to underline that the integration is (is not) performed over the manifold
whose points we label $q$. For example for trivial $f$\footnote{We use in this case a trivial $f$ only in order to simplify the notation.
A non trivial $f$ does not change the product structure of $\BM$
which is why the intuition about $\VOL_{\bar{q}}$ remains unchanged.} and an arbitrary, suitably integrable function $h$ we can choose
local coordinates $\left(x^{i}\right)_{n}$, with $i\in\left\{ 1,\cdots\d{\bm}\right\} $,
$n\in\left\{ 1,\cdots N\right\} $, of the manifold $\BM$, and use
our notation to define the following integrals
\begin{equation}
\int_{\BM/q^{i}}\,h\,\VOL_{\bar{q}^{i}}:=\int\,h\left(x_{1}^{1},\cdots,x_{\d{\bm}}^{1},\cdots,x_{1}^{i}\left(q\right),\cdots x_{\d{\bm}}^{i}\left(q\right),\cdots,x_{1}^{N},\cdots,x_{\d{\bm}}^{N}\right)\,\prod_{j\neq i}^{N}\sqrt{\left|g\right|}\,\prod_{m=1}^{\d{\bm}}\text{d}x_{m}^{j}\quad.
\end{equation}

Denoting the domain of the field $\phi^{i}$ by $\dom i$ (which is
a sub manifold of $\BM$) we will use the above notation and write
\begin{equation}
\int_{\BM/\dom i}\,h\,\VOL_{\bar{\dom i}}
\end{equation}
for integrals over the sub-manifold $\BM/\dom i$. To simplify the
notation we will symbolically use the notation of the delta $\delta^{i}\left(q\right)$
under the integral referring to the following identity 
\begin{equation}
\int_{\BM}\delta^{i}\left(q\right)\,h\,\text{Vol}:=\int_{\BM/\dom i}h\,\text{Vol}_{\bar{\dom i}}\quad,
\end{equation}
 the dependence of $h$ on the domain $\dom i$ is hereby set to a
fixed point $q\in\dom i$.

We will derive functions with respect to a parameter $\epsilon$.
In this case we always assume that the derivative is taken at the
point $\epsilon=0$ and don't write it explicitly to simplify readability.
In the single case where the derivative is not assumed to be at the
point zero we will explicitly denote it.

\subsection{Fiber derivatives \label{sub:Fiber-derivatives}}

It will be very useful to use notation that allows to avoid indices
and still suggests contractions of vector fields in the natural way.
In short, we treat derivatives as 1-forms on the appropriate space.
This allows us to write contractions as dual pairings without referring
to the index notation, improving readability.

Treating the Lagrangian $L$ as a function on three spaces - the base
manifold, the fiber and the jet - we introduce the common notions
of derivatives on each of these spaces.

Assuming that the points of the fiber $\FB$ and the jet $\JS$ are
given by a smooth field $\ef\in\Gamma\left(\VB\right)$ we can treat
the Lagrangian as a function $L:\BM\to\mathbb{R}$. Its derivative
at a point $q\in\BM$ is then a function 
\begin{equation}
\CD qL:T_{q}\BM\to\mathbb{R}\quad.
\end{equation}
In other words it is a one form on the tangent bundle $T\BM$.

On the other hand fixing the base point $q\in\BM$ we can treat the
Lagrangian as a function $L:\VB_{q}\times\JS_{q}\to\mathbb{R}$.

The derivative of $L$ in the fiber $\FB$ is denoted by 
\begin{align}
\CD{\FB}L\vert_{\Phi\left(q\right)}:T_{\Phi\left(q\right)}\VB_{q} & \to\mathbb{R}\quad.
\end{align}
It is seen as a 1-form on $T_{\Phi\left(q\right)}\VB_{q}\simeq\VB_{q}$.
We will often abandon labelling the base point and write $\CD{\FB}L\left(\ef\right)$
for $\ef\in\Gamma\left(\VB\right)$, meaning the 1-form $\CD{\FB}L$
contracted with the vector $\ef\left(q\right)\in\VB_{q}$. In local
coordinates $\left(x^{i},u^{j},u_{x^{i}}^{j}\right)_{n}$, for $i\in\left\{ 1,\cdots\d{\bm}\right\} $
$j\in\left\{ 1,\cdots,\d{\fb}\right\} $, $n\in\left\{ 1,\cdots N\right\} $\footnote{To keep the notation in the coordinate notation understandable we
assume a trivial $f$ as in the example above.}, and with $x_{n}^{j}=x_{n}^{j}\left(q\right)$, $u^{j,n}=\phi^{jn}\left(x_{1}^{j},\cdots,x_{N}^{j}\right)$
and $u_{x_{n}^{i}}^{j,n}=D_{x_{n}^{i}}\phi^{jn}\left(x_{1}^{j},\cdots x_{N}^{j}\right)$,
we get 
\begin{equation}
\CD{\FB}L\left(\phi\right)=\sum_{n=1}^{N}\sum_{s=1}^{\d{\fb}}\partial_{u^{s,n}}L\left(\left(x,u,u_{x}\right)\right)u^{s,n}\quad.
\end{equation}
It follows directly from the above coordinate representation that
the fiber derivative splits in a natural way into contractions on
each $E$ as 
\begin{equation}
\CD{\FB}L\left(\ef\right)=D_{V^{1}}L\left(\phi^{1}\right)+\cdots+D_{V^{n}}L\left(\phi^{n}\right)\quad.
\end{equation}

The derivative of $L$ in the jet $\JS$ is denoted by 
\begin{align}
\CD{\JS}L:T_{j\left(\Phi\right)}\JS_{q} & \to\mathbb{R}\quad,
\end{align}
again with $T_{j\left(\Phi\right)}\JS_{q}\simeq\JS_{q}$. It is a
function on the differentials of fields $\CD q\ef$. The later, however,
are linear functions on $T_{q}\BM$ which are isomorphic to sections
of the product bundle $E\otimes\left(T\BM\right)^{*}$. For this reason
$\CD{\JS}L$ is isomorphic to sections of $E^{*}\otimes T\BM$. We
mean exactly this natural contraction when we write $\CD{\JS}L\left(\CD q\ef\right)$.
In coordinates, we can explicitly write 
\begin{equation}
\CD{\JS}L\left(D\phi\right)=\sum_{n=1}^{N}\sum_{s=1}^{\d{\fb}}\sum_{i=1}^{\d{\bm}}\partial_{u_{x_{n}^{i}}^{s,n}}L\left(\left(x,u,u_{x}\right)\right)\cdot u_{x_{n}^{i}}^{s,n}\quad.
\end{equation}
This contraction splits in the same way as above in 
\begin{equation}
\CD{\JS}L\left(D\phi\right)=D_{J^{1}}L\left(D\phi^{1}\right)+\cdots+D_{J^{n}}L\left(D\phi^{n}\right)\quad.
\end{equation}
It will be useful to integrate by parts the above relation to get
\begin{equation}
\CD{\JS}L\left(D\ef\right)=\text{tr}\left[\CD{}\left(\CD{\JS}L\right)\right]\left(\ef\right)-\text{tr}\left[\CD{}\left(\CD{\JS}L\left(\ef\right)\right)\right]\quad,
\end{equation}
where the trace is understood as the contraction of the 1-form coming
from $\CD{}$ with the vector part $T_{p}\BM$ of the $\CD{\JS}L$.
In local coordinates the trace becomes 
\begin{equation}
\text{tr}\left[\CD{}\left(\CD{\JS}L\left(\ef\right)\right)\right]=\sum_{n=1}^{N}\sum_{i=1}^{\d{\bm}}\sum_{j=1}^{\d{\fb}}\partial_{x_{n}^{i}}\partial_{u_{x_{n}^{i}}^{j,n}}L\cdot u^{j,n}\quad.
\end{equation}
which is the covariant definition of divergence. We write 
\begin{equation}
\CD{\JS}L\left(\CD q\ef\right)=\text{div}\left[\CD{\JS}L\right]\left(\ef\right)-\text{div}\left[\left(\CD{\JS}L\left(\phi\right)\right)\right]\quad.
\end{equation}
Note that also the right-hand-side has a natural splitting into fibers
as 
\begin{equation}
\text{div}\left(\CD{\JS}L\right)\left(\ef\right)=\text{div}_{\dom 1}\left[D_{J^{1}}L\right]\left(\phi^{1}\right)+\cdots+\text{div}_{\dom N}\left[D_{J^{n}}L\right]\left(\phi^{n}\right)\quad,\label{eq:Fibration of the divergence}
\end{equation}
where we denote $\text{div}_{\dom i}$ the divergence on the single
domain of $\phi^{i}$. As we can see the fiber derivatives split in
a natural way into a sum indexed by the field index $j\in\left\{ 1,\cdots,N\right\} $.
However, notice that the divergence of a general vector field $X\in T\BM$
does not split in the above way.

To conclude the section we present a short table (\eqref{tab:An-overview-of})
of symbols that we use throughout the paper.

\begin{table}[H]
\begin{centering}
\begin{tabular}{>{\centering}p{1.5cm}|>{\raggedright}p{5cm}|>{\centering}p{1.5cm}>{\raggedright}p{5cm}}
\multicolumn{2}{c|}{\textbf{Single vector bundle}} & \multicolumn{2}{c}{\textbf{Pull back bundle}}\tabularnewline
\hline 
$\vb$  & Total space  & $\VB$  & Total space \tabularnewline
\hline 
$\bm$  & Base manifold  & $\BM$  & Base manifold \tabularnewline
\hline 
$\fb$  & Single fiber  & $\FB$  & Fiber \tabularnewline
\hline 
$\pi$  & Projection on $\bm$  & $\pi^{\VB}$  & Projection on $\BM$\tabularnewline
\hline 
$\js$  & Jet space  & $\JS$  & Jet space \tabularnewline
\hline 
$\jb$  & Jet bundle  & $\JB$  & Jet bundle \tabularnewline
\hline 
$\cd{}$  & Covariant derivative  & $\CD{}$  & Covariant derivative\tabularnewline
\hline 
$\cd{\fb}$  & Fiber derivative in $\fb$  & $\CD{\FB}$  & Fiber derivative in $\FB$\tabularnewline
\hline 
$\cd{\js}$  & Fiber derivative in $\js$  & $\CD{\JS}$  & Fiber derivative in $\JS$\tabularnewline
\hline 
$\vol$  & Volume element on $\bm$  & $\VOL$  & Volume element on $\BM$\tabularnewline
\hline 
 &  & $\dom i$  & Domain of the fiber $\fb^{i}\subset\FB$\tabularnewline
\hline 
$\d{\bm}$  & Dimension of $\bm$  & $\d{\BM}$  & Dimension of $\BM$\tabularnewline
\hline 
$\d{\fb}$  & Dimension of $\fb$  & $N$  & Number of non local points\tabularnewline
\hline 
$\phi$  & Smooth fields in $\Gamma\left(\vb\right)$  & $\ef$  & Smooth fields in $\Gamma\left(\VB\right)$\tabularnewline
\hline 
\end{tabular}
\par\end{centering}

\protect\caption{An overview of the most used symbols in this paper \label{tab:An-overview-of}}
\end{table}

\section{Symmetries and continuity equations for non-local actions\label{sec:Symmetries-and-continuity}}

In this section we give the definition of the non-local field theories,
described by non-local action functionals, that we consider in this
paper. Loosely speaking, we call \textit{non-local} an action functional
that is defined by a Lagrangian, which depends on $N$ copies of a
jet bundle. Theories described by this kind of actions often appear
when the physical system consists of many particles, their interaction
can not be neglected and is not of contact type, in the classical
limit. They appear in various branches of physics such as hydrodynamics,
condensed matter and solid state physics. In fact, an example of non-local
theories of the above type is given by effective theories, which arise
when one disregards some microscopic degrees of freedom to obtain
a theory of fewer variables. The price to pay is often the emergence
of non-locality of the type we study. The most prominent examples
of this type are models with Coulomb interaction, which are effective
theories of an underlying local (and relativistic) electrodynamics.
They also appear in some models of quantum gravity, although of course
the physical interpretation is very different. We also point out,
that although non local theories often arise from coarse graining
of microscopic local theories, it is (to our knowledge) still unknown
if any non-local theory admits an underlying, physically adequate,
local description.

We begin with the definition of a non-local geometrical Lagrangian
and action, as the most natural extension of the usual local ones
\cite{Abraham:2008aa,Olver:1998aa}. From them, we obtain what we
call the physical non-local Lagrangians. In the following two subsections
we then apply the variational principle and the theory of Lie groups
to obtain the equations of motions and local symmetry statements.

\subsection{The non local action}

In this subsection we introduces the various definitions for the non-local
geometrical and physical actions. We will make a distinction between
geometrical and physical quantities with an superscript $G$ or $P$
respectively. The motivation for this separation will become apparent
at the end of this section. We begin with the definition of non local
geometrical Lagrangian. 
\begin{defn}
Let $\JB$ be a jet bundle over the vector bundle $\VB$. A map $L^{G}:\JB\to\mathbb{R}$
is called a geometrical Lagrangian if it is differentiable in $\BM,\FB,\JS$
with respect to our definitions of fiber derivatives in the previous
section. 
\end{defn}
Recall that the fiber $\FB$ is isomorphic to the direct sum of $N$
fibers $V$. We call $N$ the \textsl{degree of non locality}, with
$N=1$ defining a local theory. Just as in the local case we can define
the non-local geometrical action by integrating the Lagrangian over
a region of the base manifold. 
\begin{defn}
Let $\Omega\subset\BM$ be an open region. A non-local geometrical
action is a functional $S_{\Omega}^{G}:\Gamma_{\Omega}\left(\VB\right)\to\mathbb{R}$
defined by the Lagrangian as follows 
\begin{equation}
S_{\Omega}^{G}\left[\ef\right]=\int_{\Omega}L^{G}\left(j\left(\Phi\right)\right)\,\text{Vol}\quad.
\end{equation}

\end{defn}
Since a field at each point $\VB$ has a direct sum structure $\ef\left(q\right)=\left(\phi^{1},\cdots,\phi^{N}\right)\left(f\left(q\right)\right)$
the above action corresponds to a situation where the fields cary
labels, which makes them distinguishable. In this case different fields
are evaluated at different points. Such theories are sometimes called
colored. However, in our analysis we are mainly interested in non-local
physical systems with a single field.

To define the Lagrangian on a single field we use the diagonal map
\begin{eqnarray}
\imath_{N}: & \Gamma\left(\vb\right) & \to\bigoplus_{i=1}^{N}\Gamma\left(\vb\right)\simeq\Gamma\left(\VB\right)\\
 & \phi & \mapsto\left(\phi,\cdots,\phi\right)\quad,\nonumber 
\end{eqnarray}
which can be seen as an embedding of $\Gamma\left(\vb\right)$ into
the larger space $\Gamma\left(\VB\right)$. We will sometimes write
$\imath_{N}\left(\phi\right)=\sum_{i}^{N}\phi\otimes e_{i}$ with
the standard basis $\left\{ e_{i}\right\} $ in $\mathbb{R}^{N}$.
Finally we define the non local physical Lagrangian as follows. 
\begin{defn}
Let $L^{G}$ be a geometrical Lagrangian on $\JB$ and $\imath_{N}$
be the diagonal embedding map. The corresponding non-local physical
Lagrangian $L^{P}:\BM\times\Gamma\left(\vb\right)\to\mathbb{R}$ is
defined point-wise by 
\begin{equation}
L^{P}\left(q,\phi\right)=L^{G}\left(q,\left[\imath_{N}\phi\right]\left(q\right),\CD q\left[\imath_{N}\phi\right]\right)\quad.
\end{equation}

\end{defn}
It is the physical Lagrangian which has a more straightforward physical
interpretation (hence the chosen name), even though the geometrical
interpretation is much clearer by the geometrical Lagrangian. Because
of this, we will first reformulate the physical conditions such as
equations of motion in terms of the geometrical Lagrangian and use
then the well established theory of Lie groups to obtain local symmetry
conditions.

First, we introduce the non-local physical action, whose variation
will lead to the equations of motion. 
\begin{defn}
Let $U\subset M$ be open, and $\Omega=f^{-1}\left(U^{\times N}\right)\subset\BM$.
A non-local physical action is a functional $S_{\Omega}^{P}:\Gamma_{U}\left(E\right)\to\mathbb{R}$
defined by the physical Lagrangian as 
\begin{equation}
S_{\Omega}^{P}\left[\phi\right]=\int_{\Omega}L^{P}\left(\cdot,\phi\right)\,\text{Vol}\quad.
\end{equation}

\end{defn}
Using the definition of the physical Lagrangian and the geometrical
action we see the connection between the physical and geometrical
quantities 
\begin{equation}
S_{\Omega}^{P}=S_{\Omega}^{G}\circ\imath_{N}\quad.
\end{equation}
In this formulation we can state that the main formal reason for the
failure of the standard Noether theorem in the non-local case is the
difference between the physical and geometrical action, as we will
see shortly. Notice that, indeed, in the local case ($N=1$) the inclusion
map $\imath_{N}$ is the identity, which also clarifies why the connection
of geometrical symmetries and physical equations of motion is natural.
In the non-local case, on the other hand, we will be facing a problem
in connecting the equations of motion to the symmetry properties of
the geometrical functional.

In the next section we present the equations of motion in terms of
the geometrical action and Lagrangian.

\subsection{Non local equations of motion}

In this section we apply the variational principle to the non-local
physical action $S^{P}$ and obtain the (semi-local) equations of
motion. 
\begin{lem}
\label{Naturality of i}Let $\varphi_{\epsilon}\in\Gamma\left(\vb\right)$
be a family of smooth fields, differentiable in the parameter $\epsilon$.
The diagonal inclusion map $\imath_{N}$ commutes with the partial
derivative in $\epsilon$ in the following sense, 
\begin{equation}
\partial_{\epsilon}\,\imath_{N}\left(\varphi_{\epsilon}\right)=\imath_{N}\,\partial_{\epsilon}\left(\varphi_{\epsilon}\right)\quad.
\end{equation}
\end{lem}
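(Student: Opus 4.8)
The plan is to exploit two elementary facts: that $\imath_{N}$ is a fixed (that is, $\epsilon$-independent) linear map, and that differentiation in the direct sum $\bigoplus_{i=1}^{N}\Gamma\left(\vb\right)\simeq\Gamma\left(\VB\right)$ is carried out componentwise. First I would simply unpack the definition of the diagonal embedding: for each value of the parameter one has $\imath_{N}\left(\varphi_{\epsilon}\right)=\left(\varphi_{\epsilon},\cdots,\varphi_{\epsilon}\right)$, so the image is the tuple in which every slot carries one and the same field $\varphi_{\epsilon}$.

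I would then compute the left-hand side directly. Since the target is a finite direct sum, the $\epsilon$-derivative of a tuple is the tuple of the $\epsilon$-derivatives of its entries; because $\varphi_{\epsilon}$ is differentiable in $\epsilon$ by hypothesis, each entry differentiates to $\partial_{\epsilon}\varphi_{\epsilon}$, whence $\partial_{\epsilon}\,\imath_{N}\left(\varphi_{\epsilon}\right)=\left(\partial_{\epsilon}\varphi_{\epsilon},\cdots,\partial_{\epsilon}\varphi_{\epsilon}\right)$. But this last tuple is, by the very definition of $\imath_{N}$, equal to $\imath_{N}\left(\partial_{\epsilon}\varphi_{\epsilon}\right)$, which is precisely the right-hand side. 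Equivalently, using the tensor notation $\imath_{N}\left(\phi\right)=\sum_{i}^{N}\phi\otimes e_{i}$ introduced above, I would note that the basis vectors $e_{i}$ of $\mathbb{R}^{N}$ are constant in $\epsilon$ and that $\partial_{\epsilon}$ is linear, so it passes through the finite sum and the tensor factor to give $\partial_{\epsilon}\sum_{i}\varphi_{\epsilon}\otimes e_{i}=\sum_{i}\left(\partial_{\epsilon}\varphi_{\epsilon}\right)\otimes e_{i}=\imath_{N}\left(\partial_{\epsilon}\varphi_{\epsilon}\right)$.

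The only point demanding a modicum of care — and the closest thing to an obstacle — is the interchange of the limit defining $\partial_{\epsilon}$ with the application of $\imath_{N}$. This is justified purely by linearity with constant coefficients: the difference quotient satisfies $\bigl(\imath_{N}\left(\varphi_{\epsilon+h}\right)-\imath_{N}\left(\varphi_{\epsilon}\right)\bigr)/h=\imath_{N}\bigl((\varphi_{\epsilon+h}-\varphi_{\epsilon})/h\bigr)$, and the limit $h\to 0$ commutes with the embedding because the target has only finitely many summands, each of which is the identity on $\Gamma\left(\vb\right)$. Hence no continuity or boundedness hypothesis beyond the assumed differentiability of the family $\varphi_{\epsilon}$ is needed, and the claimed commutation follows. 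The statement is genuinely a naturality property of the constant diagonal map, and I expect the whole argument to be short; its role in the paper is to license pulling $\partial_{\epsilon}$ inside $\imath_{N}$ when varying the physical action $S_{\Omega}^{P}=S_{\Omega}^{G}\circ\imath_{N}$.
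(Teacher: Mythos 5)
Your proposal is correct and follows essentially the same route as the paper's own proof, which is exactly the one-line componentwise computation $\partial_{\epsilon}\,\imath_{N}\left(\varphi_{\epsilon}\right)=\partial_{\epsilon}\left(\varphi_{\epsilon},\cdots,\varphi_{\epsilon}\right)=\left(\partial_{\epsilon}\varphi_{\epsilon},\cdots,\partial_{\epsilon}\varphi_{\epsilon}\right)=\imath_{N}\left(\partial_{\epsilon}\varphi_{\epsilon}\right)$. Your additional remarks on the difference quotient and the constancy of the basis vectors $e_{i}$ merely spell out the justification the paper leaves implicit.
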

\begin{proof}
By direct computation we obtain 
\begin{equation}
\partial_{\epsilon}\,\imath_{N}\left(\varphi_{\epsilon}\right)=\partial_{\epsilon}\left(\varphi_{\epsilon},\cdots,\varphi_{\epsilon}\right)=\left(\partial_{\epsilon}\varphi_{\epsilon},\cdots,\partial_{\epsilon}\varphi_{\epsilon}\right)=\imath_{N}\left(\partial_{\epsilon}\varphi_{\epsilon}\right)\quad.
\end{equation}

\end{proof}
Let $\varphi\in\Gamma_{U,C}\left(\vb\right)$ be an arbitrary, compactly
supported, smooth field in $U$ and let $\Omega\subset\mbox{\ensuremath{\BM}}$
such that $f\left(\Omega\right)=U^{\times N}$. The variation of $S_{\Omega}^{P}$
is given by 
\begin{align}
\text{d}S^{P}\vert_{\phi}\left(\varphi\right) & =\partial_{\epsilon}S_{\Omega}^{P}\left(\phi+\epsilon\varphi\right)\\
 & =\int_{\Omega}\,\partial_{\epsilon}L^{P}\left(\cdot,\phi+\epsilon\varphi\right)\,\VOL\nonumber \\
 & =\int_{\Omega}\,\partial_{\epsilon}L^{G}\left(\cdot,\imath_{N}\left[\phi+\epsilon\varphi\right],\CD{}\left(\imath_{N}\left[\phi+\epsilon\varphi\right]\right)\right)\,\text{\ensuremath{\VOL}}\quad.\nonumber 
\end{align}
By lemma \eqref{Naturality of i} we get 
\begin{equation}
\text{d}S^{P}\vert_{\phi}\left(\varphi\right)=\int_{\Omega}\,\CD{\FB}L^{G}|_{j\left(\Phi_{D}^{P}\right)}\left[\imath_{N}\varphi\right]+\CD{\JS}L^{G}|_{j\left(\Phi^{P}\right)}\left[\CD{}\left(\imath_{N}\varphi\right)\right]\,\VOL\quad.
\end{equation}
And by partial integration we obtain 
\begin{equation}
\text{d}S^{P}\vert_{\phi}\left(\varphi\right)=\int_{\Omega}\,\left[\CD{\FB}L^{G}-\text{div}\left(\CD{\JS}L^{G}\right)\right]_{j\left(\Phi^{P}\right)}\left[\imath_{N}\varphi\right]\,\VOL\quad.
\end{equation}
The right-hand-side defines the Euler-Lagrange equations by the extremality
condition 
\begin{equation}
\text{d}S\vert_{\phi}\left(\varphi\right)=0\qquad\forall\varphi\in\Gamma_{U,C}\left(E\right)\quad,
\end{equation}
We define the Euler 1-form as $E:=\left[\CD{\FB}L^{G}-\text{div}\left(\CD{\JS}L^{G}\right)\right]$
(not to be confused with the vector bundle $E$ ), and write 
\begin{equation}
\text{d}S\vert_{\phi}\left(\varphi\right)=\int_{\Omega}E_{\left(\Phi^{P}\right)}\circ\imath_{N}\left(\varphi\right)\,\VOL\quad.
\end{equation}
In the following we will abandon the label for the base point as well
as the notation of the integral domain $\Omega$ and simply write
$\text{d}S\left(\varphi\right)=\int_{\Omega}E\circ\imath_{N}\left(\varphi\right)$.
This equation gives rise to semi local equations of motion by the
fundamental lemma of variations, that we state below for convenience
(for further references see for example \cite{Giaquinta:1996aa}). 
\begin{lem}[Fundamental lemma of variation]
Let $f$ be a continuous, real-valued function on some region $U\subset\mathbb{R}^{m}$,
and suppose that 
\begin{equation}
\int_{U}\,f\left(x\right)\varphi\left(x\right)\,\text{d}x=0\quad,
\end{equation}
holds for all $\varphi\in C_{C}^{\infty}\left(U\right)$ with $\varphi\geq0$.
Then 
\begin{equation}
f\left(x\right)=0\quad,
\end{equation}
for all $x\in U$. 
\end{lem}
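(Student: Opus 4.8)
The plan is to argue by contradiction, exploiting the continuity of $f$ together with the existence of smooth nonnegative bump functions. Suppose that $f$ does not vanish identically, so that there is a point $x_0 \in U$ with $f(x_0) \neq 0$. The first step is to observe that, by continuity, $f$ keeps a constant sign on a whole ball: there exists $r > 0$ with $B(x_0, r) \subset U$ such that either $f > 0$ throughout $B(x_0, r)$ (when $f(x_0) > 0$) or $f < 0$ throughout $B(x_0, r)$ (when $f(x_0) < 0$).

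Next I would exhibit an explicit admissible test function concentrated in this ball. The standard choice is the mollifier
\begin{equation}
\varphi(x) = \begin{cases} \exp\!\left(\dfrac{-1}{r^2 - |x - x_0|^2}\right), & |x - x_0| < r, \\ 0, & |x - x_0| \geq r, \end{cases}
\end{equation}
which is smooth, compactly supported in $B(x_0, r)$, nonnegative, and strictly positive at $x_0$; in particular $\varphi \in C_C^\infty(U)$ with $\varphi \geq 0$, so it is admissible in the hypothesis. Since the integrand $f\varphi$ is then of constant sign on $B(x_0, r)$ and strictly nonzero on a neighbourhood of $x_0$, the integral $\int_U f\varphi\, \mathrm{d}x = \int_{B(x_0, r)} f\varphi\, \mathrm{d}x$ is strictly positive in the first case and strictly negative in the second. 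Either way this contradicts the assumption that the integral vanishes, forcing $f(x_0) = 0$; as $x_0$ was arbitrary, $f \equiv 0$ on $U$.

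The point that requires care is the sign restriction $\varphi \geq 0$ built into the hypothesis, which blocks the usual one-line argument that would simply take $\varphi$ proportional to $f$ on a small set. This is precisely why the proof must split according to the sign of $f(x_0)$, and why continuity is essential: it upgrades a single nonzero value into a full ball of constant sign, on which a single fixed-sign test function already produces a definite, nonzero integral. Beyond recalling the standard bump-function construction I do not expect any genuine obstacle; the argument is purely local and analytic, and entirely independent of the bundle and jet structure introduced earlier.
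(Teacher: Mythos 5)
Your proof is correct; it is the standard bump-function argument. Be aware, though, that the paper itself does not prove this lemma at all: it states it ``for convenience'' and refers the reader to the literature (Giaquinta--Hildebrandt), so there is no in-paper proof to compare against --- your argument simply supplies the standard proof that the citation stands in for. The contradiction scheme (continuity gives a ball of constant sign, a nonnegative mollifier supported there gives a strictly nonzero integral) is exactly the canonical route, and your remark about why the sign restriction $\varphi\geq 0$ forces this route rather than the ``take $\varphi$ proportional to $f$'' shortcut is accurate. One small technical point: the support of your mollifier is the \emph{closed} ball $\{\vert x-x_0\vert\leq r\}$, so to have $\varphi\in C_{C}^{\infty}\left(U\right)$ you should shrink $r$ so that the closed ball lies inside $U$ (possible since $U$ is open); as stated you only require the open ball $B\left(x_0,r\right)\subset U$. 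This is trivially repaired and does not affect the argument.
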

A corollary of this lemma provides the semi-local equations of motion. 
\begin{cor}
Let $E$ be a one form on $\VB$. Then we can locally write $E=\sum_{i=1\,}^{N}f^{i,j}\otimes v^{j}\otimes e^{i}$
with continues functions $f^{i,j}$ on $\Omega\times\FB$, one forms
$v^{j}$ on $V$ and dual vectors $e^{i}$ given by $e^{i}\left(e_{j}\right)=\delta_{i,j}$
for the standard basis $\left\{ e_{j}\right\} _{j=1\cdots N}$ in
$\mathbb{R}^{N}$. If 
\begin{equation}
\int_{\Omega}\left(E\circ\imath_{N}\right)\left(\varphi\right)\VOL=0\quad,
\end{equation}
for all $\varphi\in\mathcal{C_{C}^{\infty}}\left(U\right)$, then
we have for all $q\in U$ 
\begin{equation}
\int_{\Omega/\dom 1}\,f^{1}\,\VOL_{\bar{\dom 1}}+\cdots+\int_{\Omega\backslash\dom N}f^{N}\,\VOL_{\bar{\dom N}}=0\quad,
\end{equation}
where the domain over which the integration is not performed is set
to $q$. \end{cor}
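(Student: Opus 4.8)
The plan is to reduce the hypothesis to an integral identity over a single copy of $U$, exploiting the fact that the diagonal inclusion $\imath_{N}$ evaluates the test field on each of the $N$ factors separately, and then to invoke the fundamental lemma of variation. First I would write the contraction $E\circ\imath_{N}(\varphi)$ in a local trivialization (taking $f$ trivial, as in the coordinate computations elsewhere in the paper). Since $\imath_{N}(\varphi)=(\varphi,\dots,\varphi)$ and the $e^{i}$ are dual to the standard basis of $\mathbb{R}^{N}$, contracting $E=\sum_{i=1}^{N}f^{i,j}\otimes v^{j}\otimes e^{i}$ with $\imath_{N}(\varphi)$ selects, in the $i$-th summand, the copy of $\varphi$ sitting in the $i$-th slot. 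The crucial point is that this copy is evaluated at $q^{i}=\text{pr}^{i}(f(q))$, so that $(E\circ\imath_{N})(\varphi)(q)=\sum_{i=1}^{N}\sum_{j}f^{i,j}(q)\,\varphi^{j}(q^{i})$, where $\varphi^{j}=v^{j}(\varphi)$ is the $j$-th fiber component of $\varphi$ and $f^{i,j}$ is evaluated along the fixed background section.

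Next I would split the integral over $\Omega$ factor by factor using Fubini. In the $i$-th summand the only dependence on the $i$-th copy carried by the test field is through $\varphi^{j}(q^{i})$, whereas the coefficient $f^{i,j}$ depends on all copies; integrating out every factor except the $\dom i$ one first, exactly as in the notation introduced in the previous subsection, gives $\int_{\Omega}f^{i,j}\,\varphi^{j}(q^{i})\,\VOL=\int_{\dom i}\bigl(\int_{\Omega/\dom i}f^{i,j}\,\VOL_{\bar{\dom i}}\bigr)\varphi^{j}\,\vol$. Abbreviating $F^{i,j}:=\int_{\Omega/\dom i}f^{i,j}\,\VOL_{\bar{\dom i}}$, which is now a function on the single domain $\dom i$, the hypothesis becomes $\sum_{i=1}^{N}\int_{\dom i}\sum_{j}F^{i,j}\varphi^{j}\,\vol=0$ for all $\varphi\in C_{C}^{\infty}(U)$.

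Because $\Omega=f^{-1}(U^{\times N})$, every domain $\dom i$ is a copy of the same region $U$, so I would rename the integration variable in each summand to a common $q\in U$ and collect the terms into a single integral, $\int_{U}\sum_{j}\bigl(\sum_{i=1}^{N}F^{i,j}\bigr)\varphi^{j}\,\vol=0$. Since the fiber components $\varphi^{j}$ are arbitrary compactly supported functions on $U$, applying the fundamental lemma of variation separately in each $j$ (choosing $\varphi$ supported in a single fiber direction) yields $\sum_{i=1}^{N}F^{i,j}(q)=0$ for every $q\in U$ and every $j$. Contracting back with the one-forms $v^{j}$ and writing $f^{i}=\sum_{j}f^{i,j}v^{j}$ gives precisely the claimed identity $\int_{\Omega/\dom 1}f^{1}\,\VOL_{\bar{\dom 1}}+\dots+\int_{\Omega/\dom N}f^{N}\,\VOL_{\bar{\dom N}}=0$, with the domain not integrated over set to $q$.

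The main obstacle is the careful bookkeeping of the evaluation points, that is, keeping track of the fact that the $i$-th summand feels $\varphi$ only through its value on the $i$-th factor $q^{i}$; this is exactly what legitimizes the factorwise splitting of the integral and what replaces the pointwise Euler--Lagrange equation of the local case by the integrated, semi-local condition. A secondary technical point is the justification of the Fubini rearrangement together with the reduction to a trivial embedding $f$: as in the coordinate computations used throughout the paper, a non-trivial $f$ does not alter the product structure of $\BM$, so the identifications and permutations it encodes merely relabel factors and leave the argument unchanged.
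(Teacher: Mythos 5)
Your proposal is correct and follows essentially the same route as the paper's own proof: expand the contraction $E\circ\imath_{N}(\varphi)$ in the local basis so that the $i$-th summand sees $\varphi$ only through its value on the $i$-th factor, integrate out the remaining factors by Fubini, collect all terms into a single integral over $U$, and apply the fundamental lemma componentwise in the fiber index $j$. The only cosmetic difference is that you make explicit the relabelling of the $\dom i$ as copies of $U$ and the role of a non-trivial $f$, which the paper handles implicitly.
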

\begin{proof}
Applying the definitions we get 
\begin{align}
\int_{U^{\times N}}E\circ\imath_{N}\left(\varphi\right)\,\VOL & =\int_{\Omega}E\left[\imath_{N}\left(\varphi\right)\right]\,\VOL\\
 & =\sum_{j=1}^{v}\sum_{i=1}^{n}\int_{\Omega}\left(f^{i,j}\otimes v^{j}\otimes e^{i}\right)\left[\varphi^{j}\otimes v_{j}\otimes e^{i}\right]\nonumber \\
 & =\sum_{j=1}^{v}\int_{\Omega}\left(f^{1,j}\cdot\varphi^{j}+\cdots+f^{n,j}\cdot\varphi^{j}\right)\,\VOL\nonumber \\
 & =\sum_{j=1}^{v}\int_{U}\left\{ \sum_{i=1}^{n}\int_{\Omega\backslash\dom i}\,f^{i,j}\,\VOL_{\bar{\dom i}}\right\} \varphi^{j}\,\text{\ensuremath{\vol}}\quad.\nonumber 
\end{align}
By assumption, the functions $f^{j}:=\left\{ \sum_{i=1}^{n}\int_{U^{\times N-1}}\,f^{i,j}\,\VOL\right\} $
are continuous on $U$. Since $\varphi^{j}$ are independent for different
$j$ the statement follows from the fundamental lemma. 
\end{proof}
The semi-local equations of motion become 
\begin{equation}
EL\left[\cdot\right]\left(q\right)=\sum_{i}^{N}\int_{\Omega/\dom i}\,E^{i}\left(q\right)\circ\left[\imath_{N}\cdot\right]\,\VOL\quad,
\end{equation}
with the 1-forms $E^{i}=\left[D_{V^{i}}L^{G}-\text{div}_{i}\left(D_{J^{i}}L^{G}\right)\right]$.
Here the point $q\in\bm$ is subsequently identified with each domain
$\dom i$ on which the integration is not performed. Notice that these
equations are local in $q\in M$ but also depend on the value of the
field at other points on $U$ (which follows from the non-local nature
of the action). For this reason we call them \lq semi-local\rq.

For the sake of readability we will write

\begin{equation}
EL\left[\cdot\right]\left(q\right)=\sum_{i=1}^{N}\int_{\Omega}\,\delta^{i}\left(q\right)\,E^{i}\left[\imath_{N}\cdot\right]\,\VOL\quad,\label{eq:Equations of motion}
\end{equation}
using the delta notation introduced earlier. Note that due to the
above description $EL$ is a 1-form on the fiber of $\VB$, or equivalently
on vertical vector fields. In the local case $N=1$ which implies
$\imath_{N}=\text{id}$ and $\Omega=U$, the sum vanishes and we obtain
the local equations of motion 
\begin{equation}
EL\left[\cdot\right]\left(q\right)=E\left[\cdot\right]\left(q\right)=\left\{ D_{V}L-\text{div}\left[D_{J}L\right]\right\} _{\Phi\left(q\right)}\left[\cdot\right]\quad.\label{eq:Local equations of motion}
\end{equation}

Let us conclude this section with a couple of additional remarks. 
\begin{itemize}
\item The domain of integration for the physical action is chosen to be
$f^{-1}\left(U^{\times N}\right)$. This is important since the case
with a general $\Omega\subset\BM$ generates additional technical
problems, which we do not address here. The solution space of the
theory gets additional restrictions on the boundaries of each $\Omega$.
In a simple case $f=\mathds{1}$ and for $\Omega=U^{1}\times U^{2}\times\cdots\times U^{N}$
with $U^{1}\subset U^{2}\subset\cdots U^{N}$ the variation of the
action has to happen in the direction of smooth fields $\eta$ that
are compactly supported on all $U^{i}$ in $\Omega$. That implies
that $\eta$ has to vanish on at least $N$ open sets inside $U^{N}$.
This condition would affect the subsequent analysis. 
\item In a non-local theory as we have defined it, adding to the Lagrangian
the total divergence on $\BM$ of a suitable tuple function $P$ can
change the equations of motion and therefore lead to a non-equivalent
Lagrangian. This can be easily seen for the case $f=\mathds{1}$.
Due to Stokes' theorem, the action will get an additional term which
depends only on the boundary of $\Omega=U^{\times N}$. However, the
variation is not assumed to vanish on the boundary of $\Omega$ but
just on the boundary of $U$. The boundary of $\Omega$ is instead
proportional to $\partial U^{\times N}\simeq\partial U\times U^{N-1}$,
while the test field $\eta$ is vanishing only on $\partial U$, and
therefore the added function $P$ is affected by the variation. This
feature, which may lead to complications, distinguishes in general
the non-local case from the local one. 
\end{itemize}

\subsection{Symmetry of the non local action}

In this section we define the notion of a symmetry for the non-local
geometrical action. This notion is global, meaning that it is a statement
about integral quantities. However, we show that this notion is equivalent
to a fully local condition in exactly the same way as it is in the
local case.

A vector field $X\in T\VB$ induces an action of a one parameter group
$G$ on $\VB$ by translating the points on $\VB$ along the flow
$c_{\epsilon}$ of $X$. The vector field $X$ is then homomorphic
to the Lie algebra $\mathfrak{g}$ of the group $G$ and is called
the infinitesimal generator of the group action. Specifically, let
$\cdot$ denote the action of the group $G$ on $\VB$. Then for a
group element $g_{\epsilon}=\exp\left(\epsilon v\right)$ with $v\in\textfrak{g}$
and $\Phi\left(q\right)$ a point on $\VB$ the corresponding vector
field is 
\begin{equation}
\partial_{\epsilon}\,g_{\epsilon}\cdot\Phi\left(q\right)=X_{\Phi\left(q\right)}=\partial_{\epsilon}\,c_{\epsilon}\left(\Phi\left(q\right)\right)\quad,\label{eq:Infenitesimal generators}
\end{equation}
where $c_{\epsilon}$ denotes the flow of $X$ on $\VB$.

By the bundle projection $\pi^{\VB}$ we can split every vector field
$X$ as $X=X_{\BM}+X_{\FB}$ where $X_{\BM}=\text{d}\pi^{\VB}\left(X\right)$
and $X_{\FB}=X-X_{\BM}$. At each point the vector field $X_{\BM}$
is tangent to the base manifold and $X_{\FB}$ is parallel to the
fiber. We call the corresponding flows $c_{\BM\epsilon}$ and $c_{\FB\epsilon}$,
respectively. At every point we get 
\begin{equation}
\partial_{\epsilon}c_{\epsilon}=X=X_{\BM}+X_{\FB}=\partial_{\epsilon}c_{\BM\epsilon}+\partial_{\epsilon}c_{\FB\epsilon}=\partial_{\epsilon}\left(c_{\BM\epsilon},c_{\FB\epsilon}\right)\quad.
\end{equation}
Integrating from 0 to $\epsilon$ and using $c_{0}=c_{\BM0}=c_{\FB0}=\mathds{1}$
we get locally 
\begin{equation}
c_{\epsilon}=\left(c_{\BM\epsilon},c_{\FB\epsilon}\right)\quad.
\end{equation}
Notice that $c_{\BM\epsilon}:\VB\to\BM$ and $c_{\VB\epsilon}:\VB\to\FB$.
However, fixing a section $\Phi$ allows us to view the transformation
maps as 
\begin{align}
c_{\BM\epsilon}:\BM & \to\BM & c_{\VB\epsilon}: & \BM\to\FB\quad.
\end{align}
Therefore, locally for any $\Phi\left(q\right)=\left(q,\ef\left(q\right)\right)$
and any fixed $\ef\in\Gamma\left(\VB\right)$ the group action can
be split as 
\begin{equation}
c_{\epsilon}\left(q,\ef\left(q\right)\right)=\left(c_{\BM\epsilon}\left(q\right),c_{\FB\epsilon}\circ\ef\circ c_{\BM\epsilon}^{-1}\circ c_{\BM\epsilon}\left(q\right)\right)\quad.
\end{equation}
Calling the new points $c_{\BM\epsilon}\left(q\right)=q_{\epsilon}$
and the transformed fields as $\ef_{\epsilon}:=c_{\FB\epsilon}\circ\ef\circ c_{\BM\epsilon}^{-1}$
we obtain the transformation of the section as 
\begin{align}
\Phi_{\epsilon}\left(q_{\epsilon}\right) & =\left(q_{\epsilon},\ef_{\epsilon}\left(q_{\epsilon}\right)\right)\quad.\label{eq:Transformations of fields and sections}
\end{align}
The transformation of $\ef$ induce a transformation of the prolonged
fields $j\left(\ef\right)$, which we call $c_{J\epsilon}$, given
by 
\begin{equation}
c_{J\epsilon}\circ j\left(\ef\right)\circ c_{\BM\epsilon}^{-1}=j\left(c_{\FB\epsilon}\circ\ef\circ c_{\BM\epsilon}^{-1}\right)=j\left(\ef_{\epsilon}\right)\quad.
\end{equation}
We note that the section $\Phi_{\epsilon}$ can be treated as a graph
on the transformed manifold as in Eq. \eqref{eq:Transformations of fields and sections}
or equivalently as a function on the non-transformed base manifold
as 
\begin{align}
\Phi_{\epsilon}\left(q\right) & =\left(c_{\BM\epsilon}\left(q\right),\ef_{\epsilon}\circ c_{\BM\epsilon}\left(q\right)\right)\quad.\label{eq:Second view on the transforamtions of sections}
\end{align}
The section of the jet bundle becomes 
\begin{equation}
j\left(\Phi_{\epsilon}\right)\left(q\right)=\left(c_{\BM\epsilon},\ef_{\epsilon}\circ c_{\BM\epsilon},j\left(\ef_{\epsilon}\right)\circ c_{\BM\epsilon}\right)\left(q\right)\quad.
\end{equation}
The variation of the action along $X$ is then given by

\begin{align}
\text{d}S\vert_{\phi^{P}}\left(X\right)=\partial_{\epsilon}\tilde{S}\left[\ef_{\epsilon}\right] & =\partial_{\epsilon}\int_{c_{\BM\epsilon}\left(\Omega\right)}L\left(j\left(\Phi_{\epsilon}\right)\right)\,\VOL\\
 & =\int_{\Omega}\partial_{\epsilon}\left\{ L\left(j\left(\Phi_{\epsilon}\right)\right)\,c_{M\epsilon}^{*}\VOL\right\} \quad.\nonumber 
\end{align}
It turns out to be very convenient to define the characteristic vector
field of $X$ as 
\begin{align}
X_{Q} & :=\partial_{\epsilon}\ef_{\epsilon}=X_{\FB}-\CD{}\ef\left(X_{\BM}\right)\quad,
\end{align}
and call the coefficient of $X_{Q}$ the characteristic $Q$ of $X$.
Notice that $X_{Q}$ is parallel to the fiber, since 
\begin{equation}
\text{d}\pi^{\VB}\left(X_{Q}\right)=0\quad.
\end{equation}

\begin{defn}
A local group $G$ acting on $\VB$ is called a symmetry group of
the non-local action $S_{\Omega}$, if a transformation $c\in G$,
written in local trivialization as $\left(c_{\BM},c_{\FB}\right)$,
satisfies 
\begin{align}
\tilde{S}\left(c\circ\ef\right) & :=\int_{c_{\BM}\left(\Omega\right)}\,L\left(c\circ j\left(\Phi\right)\right)\,\text{Vol}=\int_{\Omega}\,L\left(j\left(\Phi\right)\right)\,\text{Vol}=S\left(\ef\right)\quad.
\end{align}

\end{defn}
We now adopt the well known theorem that connects the symmetry transformation
to a local change of the Lagrangian. The mere difference of the local
case to our situation is that the vector bundle in the local case
is a single $\vb$ whereas in our case it is $\VB$. It is therefore
obvious that geometrically these two cases are not different which
is why the theorem carries over one to one. Nevertheless, we present
the theorem as well as its proof for the convince of the reader to
adopt it to our notation. The proof below follows closely the proof
in \cite{Abraham:2008aa} and partially in \cite{Olver:1998aa}. 
\begin{thm}
\label{Local and global symmetry} A connected one-parameter group
$G$ of transformations acting on $\VB$ is a symmetry group of the
action $S$ if and only if 
\begin{equation}
\left[\CD{}L\left(X_{\BM}\right)+\CD{\FB}L\left(X_{Q}\right)+\CD{\JS}L\left(DX_{Q}\right)+L\,\text{div}\left(X_{\BM}\right)\right]=0\quad,\label{eq:local symmetry equation}
\end{equation}
where $X$ is the infinitesimal generator of the group.\end{thm}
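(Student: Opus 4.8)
The plan is to establish the stated equivalence by differentiating the invariance condition in the group parameter at $\epsilon=0$ and then invoking the fundamental lemma of variations, exactly as in the local case but with the single bundle $\vb$ replaced by $\VB$. First I would note that, because $G$ is a connected one-parameter group with $c_{\epsilon+s}=c_{\epsilon}\circ c_{s}$, the invariance $\tilde{S}(c_{\epsilon}\circ\ef)=S(\ef)$ for all $\epsilon$ is equivalent to $\partial_{\epsilon}\tilde{S}[\ef_{\epsilon}]=0$ holding along the whole orbit, and the latter reduces by the group law to checking the vanishing of the $\epsilon$-derivative at the identity for an arbitrary section. This turns the global statement into the differential statement already set up in the excerpt,
\begin{equation}
\partial_{\epsilon}\tilde{S}[\ef_{\epsilon}]=\int_{\Omega}\partial_{\epsilon}\left\{ L\left(j\left(\Phi_{\epsilon}\right)\right)\,c_{\BM\epsilon}^{*}\VOL\right\} \quad.
\end{equation}

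The core of the proof is the pointwise computation of this integrand at $\epsilon=0$ via the product rule. The volume-form factor contributes through the Lie derivative along the projected flow: since $c_{\BM\epsilon}$ is the flow of $X_{\BM}$ on $\BM$, Cartan's formula gives $\partial_{\epsilon}c_{\BM\epsilon}^{*}\VOL=\mathcal{L}_{X_{\BM}}\VOL=\text{div}\left(X_{\BM}\right)\VOL$ at $\epsilon=0$, producing the term $L\,\text{div}\left(X_{\BM}\right)$. For the Lagrangian factor I would apply the chain rule through the three slots of $j\left(\Phi_{\epsilon}\right)=\left(c_{\BM\epsilon},\ef_{\epsilon}\circ c_{\BM\epsilon},j\left(\ef_{\epsilon}\right)\circ c_{\BM\epsilon}\right)$. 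Differentiating the fiber and jet arguments at fixed $q$ splits each into a characteristic part, $\partial_{\epsilon}\ef_{\epsilon}=X_{Q}$ and its prolongation $\CD{}X_{Q}$, and a transport part $\CD{}\ef\left(X_{\BM}\right)$ coming from dragging the base point by $c_{\BM\epsilon}$. The crucial observation is that these transport pieces combine with the derivative through the explicit base slot to reconstruct the full derivative of $L$ as a function on $\BM$ along $X_{\BM}$, namely $\CD{}L\left(X_{\BM}\right)$, while the characteristic parts assemble into $\CD{\FB}L\left(X_{Q}\right)+\CD{\JS}L\left(\CD{}X_{Q}\right)$. This regrouping, which is exactly the reason the characteristic $X_{Q}=X_{\FB}-\CD{}\ef\left(X_{\BM}\right)$ is the natural variable, yields the integrand equal to the bracket in \eqref{eq:local symmetry equation} times $\VOL$.

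Having this, the equivalence follows in both directions. If $G$ is a symmetry then the integral of the bracket over $\Omega$ vanishes; since $\Omega$ and $\ef$ are arbitrary and the bracket is continuous (by the differentiability hypotheses on $L^{G}$), the fundamental lemma of variations forces the bracket to vanish identically, which is \eqref{eq:local symmetry equation}. Conversely, if the bracket vanishes identically on $\JB$, then the integrand is zero for every section, so $\partial_{\epsilon}\tilde{S}=0$ not only at $\epsilon=0$ but at every orbit point $c_{\epsilon}\circ\ef$ by the group law and connectedness; hence $\tilde{S}[\ef_{\epsilon}]$ is constant in $\epsilon$ and equal to its value $S[\ef]$ at $\epsilon=0$, which is precisely the invariance defining a symmetry group.

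I expect the main obstacle to be the bookkeeping in the chain-rule step: one must correctly track that the base transformation $c_{\BM\epsilon}$ enters all three arguments of $L$ simultaneously and verify that the transport pieces from the fiber and jet slots, together with the explicit-base derivative, reassemble precisely into the total base derivative $\CD{}L\left(X_{\BM}\right)$, leaving behind only the characteristic terms in $X_{Q}$ and $\CD{}X_{Q}$. A secondary point to check is that the Lie-derivative identity for the volume form is unaffected by the non-local structure, i.e. that $\VOL=f^{*}\vol^{\times N}$ and the pullback geometry on $\BM$ behave exactly like those of a single orientable (semi-)Riemannian manifold; this is the observation, stressed just before the theorem, that geometrically $\VB$ is indistinguishable from a single $\vb$, so the single-bundle computation of \cite{Abraham:2008aa,Olver:1998aa} transfers verbatim.
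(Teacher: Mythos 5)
Your proposal is correct and follows essentially the same route as the paper's own proof: differentiating the invariance under the integral, using $\partial_{\epsilon}\left(c_{\BM\epsilon}^{*}\VOL\right)=\text{div}\left(X_{\BM}\right)\VOL$ together with the chain rule on $j\left(\Phi_{\epsilon}\right)$, localizing the vanishing integral by arbitrariness of the subdomain, and for the converse using that the pointwise identity holds on transformed sections so that $\partial_{\epsilon}\tilde{S}$ vanishes at finite $\epsilon$, then integrating along the flow and invoking connectedness of the one-parameter group. Your explicit regrouping of the transport terms into $\CD{}L\left(X_{\BM}\right)$ is just a spelled-out version of the chain-rule step the paper leaves implicit, so there is no substantive difference.
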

\begin{proof}
If $c_{\epsilon}=\left(c_{\BM\epsilon},c_{\FB\epsilon}\right)$ is
a symmetry transformation then 
\begin{equation}
\partial_{\epsilon}\tilde{S}\left(\phi_{\epsilon}\right)=0\quad,
\end{equation}
and by definition of the non-local action we get 
\begin{equation}
0=\partial_{\epsilon}\int_{c_{\BM\epsilon}\left(\tilde{\Omega}\right)}\,L\left(j\left(\Phi_{\epsilon}\right)\right)\,\text{\ensuremath{\VOL}}=\int_{\tilde{\Omega}}\,\partial_{\epsilon}\left[L\left(j\left(\Phi_{\epsilon}\right)\right)\,c_{M\epsilon}^{*}\text{\ensuremath{\VOL}}\right]\quad,
\end{equation}
for all $\tilde{\Omega}\subset\Omega$. This implies the local statement
\begin{equation}
\partial_{\epsilon}\left[L\left(j\left(\Phi_{\epsilon}\right)\right)\,c_{M\epsilon}^{*}\VOL\right]=0\quad.\label{eq:local symmetry equation - proof}
\end{equation}
On a manifold with a volume element the divergence can be written
as 
\begin{equation}
\partial_{\epsilon}\,\left(c_{M\epsilon}^{*}\VOL\right)=\text{div}\left(X_{M}\right)\,\VOL\quad.
\end{equation}
With this definition and the chain rule we obtain equation \eqref{eq:local symmetry equation}.

Conversely, if equation \eqref{eq:local symmetry equation} is satisfied
everywhere then the following is also true 
\begin{equation}
\left[\CD{}L\left(X_{\BM}\right)+\CD{\FB}L\left(X_{Q}\right)+\CD{\JS}L\left(DX_{Q}\right)+L\,\text{div}\left(X_{\BM}\right)\right]_{\left(\Phi_{\epsilon}\right)}\,c_{M\epsilon}^{*}\text{Vol}=0\quad,\label{eq:finite local transformation}
\end{equation}
where the equation in the brackets is to be taken at the point $\Phi_{\epsilon}$
for a small but finite $\epsilon$. Then the equation \eqref{eq:finite local transformation}
is a differential of the equation \eqref{eq:local symmetry equation - proof}
which we can write as 
\begin{equation}
\partial_{\epsilon}\left[L\left(j\left(\Phi_{\epsilon}\right)\right)\,c_{M\epsilon}^{*}\text{Vol}\right]\vert_{\epsilon\,\text{finite}}=0\quad,
\end{equation}
which leads to 
\begin{equation}
\partial_{\epsilon}\tilde{S}\left(\ef_{\epsilon}\right)=0\quad.
\end{equation}
Integrating this equation from 0 to $\epsilon$ and using the fact
that the $c_{0}=\mathds{1}$, we get 
\begin{equation}
\tilde{S}\left(\ef_{\epsilon}\right)=S\left(\ef\right)\quad,
\end{equation}
for a transformation $c_{\epsilon}$ sufficiently near the identity.
However, since every connected one-dimensional subgroup is generated
by a transformation of the form $c_{\epsilon}=\exp\left(\epsilon X\right)$
the above statement holds everywhere, which concludes the proof. 
\end{proof}
For local theories the corollary of the above theorem is known as
the Noether theorem \cite{Noether:1971aa,Olver:1998aa}. In our case,
however, it is valued on the product bundle $\VB$ and it will now
become apparent that the Noether theorem fails. 
\begin{cor}
\label{cor:Local symmetry with Euler one form} Let $X$ be a vector
field on $\VB$ with characteristic $X_{Q}$ and projection on the
base manifold $X_{\BM}$. Then $X$ induces a symmetry transformation
iff 
\begin{equation}
\text{div}\left(\CD{\JS}L\left(X_{Q}\right)\right)+\text{div}\left(L\cdot X_{\BM}\right)+E\left(X_{Q}\right)=0\quad.\label{eq:Symmetry equation 2}
\end{equation}
With the Euler one form defined as above, $E:=\CD{\FB}L-\text{div}\left(\CD{\JS}L\right)$.\end{cor}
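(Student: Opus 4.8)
The plan is to derive \eqref{eq:Symmetry equation 2} by a purely algebraic, pointwise rearrangement of the local symmetry condition \eqref{eq:local symmetry equation} of Theorem \ref{Local and global symmetry}. Since that theorem already supplies the equivalence between $X$ generating a symmetry and \eqref{eq:local symmetry equation}, and since every manipulation below is an exact differential identity rather than an implication, the ``if and only if'' of the corollary is inherited at once: it suffices to show that \eqref{eq:local symmetry equation} and \eqref{eq:Symmetry equation 2} are two ways of writing the same expression.

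First I would absorb the two base-manifold terms of \eqref{eq:local symmetry equation} into a single divergence. Because $\text{div}$ is the metric divergence and $L$ is a scalar, the Leibniz rule gives
\begin{equation}
\text{div}\left(L\cdot X_{\BM}\right)=\CD{}L\left(X_{\BM}\right)+L\,\text{div}\left(X_{\BM}\right)\quad,
\end{equation}
so that $\CD{}L\left(X_{\BM}\right)+L\,\text{div}\left(X_{\BM}\right)$ collapses to $\text{div}\left(L\cdot X_{\BM}\right)$.

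Next I would rewrite the jet term using the integration-by-parts identity for the jet derivative from Section \ref{sub:Fiber-derivatives}. Applied to the vertical field $X_{Q}$ it expresses $\CD{\JS}L\left(DX_{Q}\right)$ as the difference between the total divergence of the vector field $\CD{\JS}L\left(X_{Q}\right)\in T\BM$ and the contraction $\text{div}\left(\CD{\JS}L\right)\left(X_{Q}\right)$, namely
\begin{equation}
\CD{\JS}L\left(DX_{Q}\right)=\text{div}\left(\CD{\JS}L\left(X_{Q}\right)\right)-\text{div}\left(\CD{\JS}L\right)\left(X_{Q}\right)\quad.
\end{equation}
Substituting this together with the Leibniz identity into \eqref{eq:local symmetry equation} and collecting the two fiber-contracted terms, I would recognise
\begin{equation}
\CD{\FB}L\left(X_{Q}\right)-\text{div}\left(\CD{\JS}L\right)\left(X_{Q}\right)=E\left(X_{Q}\right)\quad,
\end{equation}
with $E=\CD{\FB}L-\text{div}\left(\CD{\JS}L\right)$ the Euler one-form, which leaves exactly \eqref{eq:Symmetry equation 2}.

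The step demanding most care is the integration-by-parts identity, which hinges on correctly distinguishing the two divergence operations acting on $\CD{\JS}L$, a section of $\VB^{*}\otimes T\BM$: the divergence $\text{div}\left(\CD{\JS}L\left(X_{Q}\right)\right)$ is taken after contracting the fiber slot with $X_{Q}$, producing an honest $T\BM$ vector field, whereas $\text{div}\left(\CD{\JS}L\right)$ acts on the $T\BM$ slot alone and returns a residual fiber one-form that is only then paired with $X_{Q}$. Keeping these apart, and fixing the relative sign so that the fiber-contracted remainder assembles into the Euler one-form $E$, is where the argument is most error-prone; everything else is the Leibniz rule and relabelling.
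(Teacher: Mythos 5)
Your proposal is correct and follows essentially the same route as the paper: invoke Theorem \ref{Local and global symmetry}, rewrite $\CD{}L\left(X_{\BM}\right)+L\,\text{div}\left(X_{\BM}\right)$ as $\text{div}\left(L\cdot X_{\BM}\right)$ via the Leibniz rule, and integrate the jet term by parts so that the remaining fiber-contracted terms assemble into $E\left(X_{Q}\right)$. Your explicit care with the two divergence operations (and the resulting sign) is in fact the consistent reading of the integration-by-parts identity that the paper's proof invokes from section \ref{sub:Fiber-derivatives}, so the arguments coincide.
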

\begin{proof}
From theorem \eqref{Local and global symmetry}, $X$ is a symmetry
iff 
\begin{equation}
\left[\CD{}L\left(X_{\BM}\right)+\CD{\FB}L\left(X_{Q}\right)+\CD{\JS}L\left(\CD{}X_{Q}\right)+L\,\text{div}\left(X_{\BM}\right)\right]\VOL=0\quad.\label{eq:Symmetry variation of the Lagrangian}
\end{equation}
By partial integration as described in section \ref{sub:Fiber-derivatives}
we obtain 
\begin{equation}
\text{div}\left(\CD{\JS}L\left(X_{Q}\right)\right)+\text{div}\left(L\cdot X_{\BM}\right)+\left[\CD{\FB}L-\text{div}\left(\CD{\JS}L\right)\right]\left(X_{Q}\right)=0\quad.\label{eq:Symmetry equation}
\end{equation}

\end{proof}
In the local case the Euler one form coincides with the equations
of motion Eq. \eqref{eq:Local equations of motion}, i.e. $E=EL$
and equation \eqref{eq:Symmetry equation 2} becomes the well known
Noether identity, 
\begin{equation}
\text{div}\left(\CD{\JS}L\left(X_{Q}\right)+L\cdot X_{\BM}\right)+EL\left(X_{Q}\right)=0\quad.
\end{equation}
In the non local case, however, we have $E\neq EL,$ which makes it
difficult to connect it to physically relevant quantities and leads
to the violation of the Noether theorem.

In general, equation \eqref{eq:Equations of motion} and \eqref{eq:Symmetry equation 2}
are very different, since the first one applies only on diagonal vector
fields, whereas the second is true for a generic symmetry vector field
on $T\VB$, which in general does not even split into a direct sum
of vector fields on separate vector bundles \cite{Kurz:2011aa}. Such
general vector fields, however, would correspond to non-local physical
transformations where a shift of a field at a point $x$ depends on
the field value at the point $y$. Only little is known about such
vector fields and the corresponding symmetry transformations, and
their relation to physics (in terms both of applicability and of meaning)
is not clear. In the following we will therefore restrict ourselves
to the case of diagonal symmetries, and present the relation between
equations of motion and the symmetry condition in such cases. As we
will see, even in the case of diagonal vector fields this relation
is not trivial.

An additional remark is in order. Due to the mentioned problem of
non-equivalent Lagrangians that differ by a total divergence on $\BM$,
the extension of the geometrical symmetry group to divergence symmetries
can not be carried over from the local theory of variational problems.
This extension is very important for a complete characterization of
(local) symmetries of non-local actions, but goes beyond the scope
of this paper (for further references see \cite{Olver:1998aa}).

\section{generalised conservation law\label{sec:Generalized-conservation-law}}

From the above treatment we see that, contrary to the local case,
the equations of motion do not directly appear in the equation encoding
the existence of a symmetry. This makes it difficult to combine these
two in order to obtain physical conservation laws. In this section
we introduce a possible connection of the two equations.

After briefly recalling the definition of the Fréchet derivative we
show that it provides a natural connection between symmetries and
equations of motion. This leads us to the generalization of the usual
conservation law, which we state in theorem \eqref{thm:Main result}.

For the convenience of the reader we recall here the equations of
motion and the symmetry equation, which we are going to combine in
what follows 
\begin{align}
EL\left(q\right)\left[X\right] & =\sum_{i}\int_{\Omega}\,\delta^{i}\left(q\right)E^{i}\left[X\right]\,\text{Vol} & \text{div}\left(\CD{\JS}L\left(X_{Q}\right)\right)+\text{div}\left(L\cdot X_{\BM}\right)+E\left(X_{Q}\right) & =0\quad.
\end{align}

We begin recalling the definition of the Fréchet derivative adopting
the notation from \cite{Olver:1998aa}. We denote a set of smooth
functions that depend on the base points, fields and their derivatives
to some finite order by $\mathcal{A}$, and further denote a space
of $l$-tuples of differentiable functions by $\mathcal{A}^{l}$,
that is functions $P=\left(P^{1},\cdots,P^{l}\right)$ where each
$P^{i}\in\mathcal{A}$. The Fréchet derivative is defined as follows. 
\begin{defn}
Let $P\in\mathcal{A}^{l}$ and $Q\in\mathcal{A}^{m}$ then the Fréchet
derivative of $P$ in the direction $Q$ is a differential operator
$D_{P}:\mathcal{A}^{m}\to\mathcal{A}^{l}$ defined as 
\begin{equation}
D_{P}\left(Q\right)=\partial_{\epsilon}\vert_{0}P\left(q,\ef+\epsilon Q\left(\ef\right),\CD{}\left(\ef+\epsilon Q\left(\ef\right)\right),\cdots\right)\quad.
\end{equation}

\end{defn}
We can now prove the following theorem. 
\begin{thm}
\label{thm:Main result} $X$ is a symmetry vector field of the non
local action $S$ with the characteristic $Q$ if and only if 
\begin{equation}
EL\left(X_{Q}\right)\left(z\right)+\sum_{i=1}^{n}\text{div}_{\dom i}\left(A^{i}\right)\left(z\right)+\int_{\Omega}\text{div}_{\BM}\left(B\right)\delta^{\alpha}\left(z\right)+\int_{\Omega}\sum_{i=1}^{n}D_{L}\left(Q^{i}\right)\left[\delta^{\alpha}-\delta^{i}\right]\left(z\right)=0\quad,\label{eq:Equation from the main result}
\end{equation}
for all $\alpha\in\left\{ 1,\cdots,n\right\} $, with $A$ being a
$n$-tuple 
\begin{equation}
A^{i}\left(z\right)=\int_{\Omega}D_{J^{i}}L\left(X^{i}\right)\delta^{i}\left(z\right)\quad,
\end{equation}
and $B\in TM$ as 
\begin{equation}
B=L\cdot X\quad.
\end{equation}

\end{thm}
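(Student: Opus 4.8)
The plan is to depart from the local symmetry characterisation of Corollary \eqref{cor:Local symmetry with Euler one form}: $X$ generates a symmetry if and only if the pointwise identity \eqref{eq:Symmetry equation 2} holds everywhere on $\BM$. The whole content of the theorem is that this single pointwise equation can be repackaged, for every choice of a distinguished domain $\dom\alpha$, into a statement in which the physical equations of motion $EL$ appear explicitly. The obstruction to doing this naively is exactly the one already emphasised after Corollary \eqref{cor:Local symmetry with Euler one form}: the Euler one-form $E=\CD{\FB}L-\text{div}\left(\CD{\JS}L\right)$ is a purely local object on the product bundle $\VB$, whereas $EL$ from \eqref{eq:Equations of motion} is the fibre-wise piece $E^{i}$ integrated over the complementary domains. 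The Fr\'echet derivative is the tool that quantifies this gap, so the strategy is: integrate \eqref{eq:Symmetry equation 2} over $\Omega$ against the insertion $\delta^{\alpha}\left(z\right)$ --- which pins the argument in $\dom\alpha$ and integrates over all other domains --- and then reorganise term by term.

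First I would dispose of the two divergence terms. Splitting $\text{div}\left(\CD{\JS}L\left(X_{Q}\right)\right)$ along the fibres via \eqref{eq:Fibration of the divergence} yields a sum of single-domain divergences $\text{div}_{\dom i}$; after writing $X_{Q}=X_{\FB}-\CD{}\ef\left(X_{\BM}\right)$ and reinstating the $X_{\BM}$ pieces through the standard Noether rearrangement, the jet-derivative contributions on each domain assemble precisely into $\text{div}_{\dom i}\left(A^{i}\right)$ with the currents $A^{i}\left(z\right)=\int_{\Omega}D_{J^{i}}L\left(X^{i}\right)\delta^{i}\left(z\right)$. The remaining piece $\text{div}\left(L\cdot X_{\BM}\right)$ is a divergence on the \emph{full} base $\BM$, which --- as warned after \eqref{eq:Fibration of the divergence} --- does not split fibre-wise; it therefore survives intact as $\int_{\Omega}\text{div}_{\BM}\left(B\right)\delta^{\alpha}\left(z\right)$ with $B=L\cdot X$. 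Tracking which divergence is full and which is fibre-wise, and checking that the boundary integrals generated on the integrated-out domains genuinely vanish, is the first delicate point.

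The heart of the proof is the term $\int_{\Omega}\delta^{\alpha}\left(z\right)E\left(X_{Q}\right)$. Writing $E=\sum_{i}E^{i}$ and adding and subtracting the equations of motion gives the clean identity $\int_{\Omega}\delta^{\alpha}E\left(X_{Q}\right)=EL\left(X_{Q}\right)+\sum_{i}\int_{\Omega}\left[\delta^{\alpha}-\delta^{i}\right]E^{i}\left(X_{Q}\right)$, which already exhibits the $\alpha$-independent $EL\left(X_{Q}\right)$ together with a genuinely non-local correction carrying the difference $\delta^{\alpha}-\delta^{i}$. It remains to recognise that $E^{i}\left(X_{Q}\right)$ is, up to an exact fibre divergence, the Fr\'echet derivative $D_{L}\left(Q^{i}\right)$: since $D_{L}\left(Q^{i}\right)=D_{V^{i}}L\left(Q^{i}\right)+D_{J^{i}}L\left(\CD{}Q^{i}\right)$, one integration by parts on $\dom i$ produces $E^{i}\left(Q^{i}\right)=D_{L}\left(Q^{i}\right)-\text{div}_{\dom i}\left(D_{J^{i}}L\left(Q^{i}\right)\right)$, and the divergence just generated is reabsorbed into the $A^{i}$ and $B$ already collected. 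Making this identification airtight --- and verifying that every stray divergence really lands inside the previously defined currents --- is the main obstacle; this is the step on which the whole statement hinges.

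Finally, for the converse I would simply reverse these steps. Assuming \eqref{eq:Equation from the main result} holds for every $\alpha$, I would take differences in $\alpha$ to strip off the $\alpha$-independent terms $EL\left(X_{Q}\right)$ and $\sum_{i}\text{div}_{\dom i}\left(A^{i}\right)$, reassemble the integrand of the single pointwise equation, and then appeal to the fundamental lemma of variation (in the semi-local form already used for the equations of motion) to descend from the integrated identity back to the pointwise symmetry equation \eqref{eq:Symmetry equation 2}; Corollary \eqref{cor:Local symmetry with Euler one form} then certifies that $X$ is a symmetry. Because each manipulation above is an equivalence --- a contraction with an arbitrary delta insertion followed by the fundamental lemma --- the forward and backward arguments close into the asserted \lq if and only if\rq.
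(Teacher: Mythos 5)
Your forward direction is, after unwinding, the same computation as the paper's: both rest on the per-field Fr\'echet identity \eqref{eq:Single Frechet derivative}, $D_{L}\left(Q^{i}\right)=\text{div}_{\dom i}\left(\CD{\JS}L\left(X_{Q}^{i}\right)\right)+E\left(X_{Q}^{i}\right)$, together with an add-and-subtract of deltas. The difference is ordering: the paper convolves each such identity with its \emph{own} $\delta^{i}\left(z\right)$, sums over $i$, and only then uses Lemma \eqref{lem:The-symmetry-condition in Frechet style} to trade $\sum_{i}D_{L}\left(Q^{i}\right)$ for $-\text{div}\left(L\cdot X_{\BM}\right)$ under a $\delta^{\alpha}$; you convolve the pointwise symmetry equation \eqref{eq:Symmetry equation 2} with $\delta^{\alpha}$ and then split the Euler term. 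Your route produces the terms $\int_{\Omega}\delta^{\alpha}\left(z\right)\,\text{div}_{\dom i}\left(D_{J^{i}}L\left(X_{Q}^{i}\right)\right)$ with $i\neq\alpha$, i.e.\ integrals of divergences over domains that are integrated out; these do \emph{not} "genuinely vanish" in general ($X_{Q}$ is a symmetry generator, not compactly supported), but you do not need them to: exactly the same terms reappear with opposite sign when you insert $E^{i}\left(X_{Q}^{i}\right)=D_{L}\left(Q^{i}\right)-\text{div}_{\dom i}\left(D_{J^{i}}L\left(X_{Q}^{i}\right)\right)$ into $\sum_{i}\int_{\Omega}\left[\delta^{\alpha}-\delta^{i}\right]E^{i}\left(X_{Q}^{i}\right)$, and they cancel identically, while the pieces $\int_{\Omega}\delta^{i}\,\text{div}_{\dom i}\left(\cdots\right)=\text{div}_{\dom i}\left(A^{i}\right)\left(z\right)$ are what assemble the full sum $\sum_{i}\text{div}_{\dom i}\left(A^{i}\right)$. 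The paper's ordering never meets these spurious boundary terms, because a divergence on $\dom i$ is only ever paired with $\delta^{i}$; with that bookkeeping correction your forward argument closes.

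The converse is where your proposal has a genuine gap. Chasing the (reversible) identities shows that \eqref{eq:Equation from the main result}, for all $\alpha$ and $z$, is \emph{exactly} the statement $\int_{\Omega}G\,\delta^{\alpha}\left(z\right)=0$, where $G:=\sum_{i}D_{L}\left(Q^{i}\right)+\text{div}\left(L\cdot X_{\BM}\right)$ is the pointwise symmetry defect of Lemma \eqref{lem:The-symmetry-condition in Frechet style}; i.e.\ it says only that every one-point marginal of $G$ vanishes. There is no arbitrary test function anywhere in this statement, so the fundamental lemma of variation has nothing to act on, and vanishing of all marginals does not imply $G=0$ pointwise: already for $N=2$ take $G\left(x,y\right)=g_{1}\left(x\right)g_{2}\left(y\right)$ with $\int_{U}g_{1}=\int_{U}g_{2}=0$ and $g_{1},g_{2}\not\equiv0$. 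Hence "descending from the integrated identity back to the pointwise symmetry equation \eqref{eq:Symmetry equation 2}" is not a valid step, and taking differences in $\alpha$ only removes the $\alpha$-independent terms, it does not manufacture test functions. The paper's converse is structured precisely to avoid this: it integrates the identity over $z$, recognises the result as the first-order variation of the action when \emph{only} the $\alpha$-th field is transformed, sums over $\alpha$ (the sum of single-slot variations equals the full variation $\partial_{\epsilon}\tilde{S}\left(\ef_{\epsilon}\right)$), and then uses the connectedness/one-parameter-group argument, applicable because the identity holds off-shell for every field configuration, to integrate the infinitesimal invariance to finite invariance of the action. You should replace your final paragraph by this global argument; recovering \eqref{eq:Symmetry equation 2} itself from \eqref{eq:Equation from the main result} is not possible.
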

Before proving this theorem we first show the following trivial identity. 
\begin{lem}
\label{lem:The-symmetry-condition in Frechet style} The symmetry
condition from the corollary \eqref{cor:Local symmetry with Euler one form}
reads in terms of Fréchet derivatives of the Lagrangian as 
\begin{equation}
D_{L}\left(Q^{1}\right)+\cdots+D_{L}\left(Q^{n}\right)=-\text{div}\left(L\cdot X_{M}\right)\quad.\label{eq:Symmetry in reduced Frechet}
\end{equation}
where $X$ is the diagonal symmetry vector field, $Q^{i}$ the characteristic
function of each individual component.\end{lem}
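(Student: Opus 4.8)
The plan is to reduce the claim to two elementary ingredients: the chain-rule unpacking of the Fr\'echet derivative and the Leibniz rule for the divergence. Rather than starting from the divergence form in Corollary \eqref{cor:Local symmetry with Euler one form}, I would start from the equivalent expression that appears just before the partial integration in its proof, namely equation \eqref{eq:Symmetry variation of the Lagrangian},
\begin{equation}
\CD{}L\left(X_{\BM}\right)+\CD{\FB}L\left(X_{Q}\right)+\CD{\JS}L\left(\CD{}X_{Q}\right)+L\,\text{div}\left(X_{\BM}\right)=0\quad.
\end{equation}
This is the convenient starting point because it still displays the fiber and jet derivatives explicitly, and these are precisely what the Fr\'echet derivative repackages.

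First I would compute the Fr\'echet derivative of $L$ in the direction of the single $i$-th characteristic component $Q^{i}$. Applying the chain rule to the definition, and using that only the $i$-th field slot together with its derivative is perturbed, one obtains
\begin{equation}
D_{L}\left(Q^{i}\right)=D_{V^{i}}L\left(Q^{i}\right)+D_{J^{i}}L\left(\CD{}Q^{i}\right)\quad.
\end{equation}
Summing over $i$ and invoking the natural splitting of the fiber and jet derivatives established in Section \ref{sub:Fiber-derivatives}, that is $\CD{\FB}L\left(X_{Q}\right)=\sum_{i}D_{V^{i}}L\left(Q^{i}\right)$ and $\CD{\JS}L\left(\CD{}X_{Q}\right)=\sum_{i}D_{J^{i}}L\left(\CD{}Q^{i}\right)$, collapses the sum into
\begin{equation}
\sum_{i=1}^{n}D_{L}\left(Q^{i}\right)=\CD{\FB}L\left(X_{Q}\right)+\CD{\JS}L\left(\CD{}X_{Q}\right)\quad.
\end{equation}

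The remaining step is just the Leibniz rule for the divergence of a scalar times a vector field, $\text{div}\left(L\cdot X_{\BM}\right)=\CD{}L\left(X_{\BM}\right)+L\,\text{div}\left(X_{\BM}\right)$, which merges the two base-manifold terms of the symmetry condition into a single divergence. Substituting both observations into the displayed symmetry equation and rearranging yields $\sum_{i}D_{L}\left(Q^{i}\right)=-\text{div}\left(L\cdot X_{\BM}\right)$, which is the assertion; since every manipulation is an identity, the equivalence holds in both directions. I do not expect a genuine obstacle, as the lemma is an identity, and the only point requiring care is the bookkeeping of the diagonal structure: one must keep track that $D_{L}\left(Q^{i}\right)$ perturbs only the $i$-th copy of the field, so that summing over $i$ reconstructs the full contractions $\CD{\FB}L$ and $\CD{\JS}L$ exactly once rather than overcounting. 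Once that is handled, the result is immediate.
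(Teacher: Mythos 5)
Your proof is correct, and it establishes exactly the lemma's identity, but by a slightly different decomposition than the paper's. The paper starts from the post-integration-by-parts form of the symmetry condition, Eq.~\eqref{eq:Symmetry equation 2}, and matches it against the Euler-form expression of the Fr\'echet derivative, $D_{L}\left(Q\right)=\text{div}\left(\CD{\JS}L\left(X_{Q}\right)\right)+E\left(X_{Q}\right)$, only afterwards splitting $D_{L}\left(Q\right)=D_{L}\left(Q^{1}\right)+\cdots+D_{L}\left(Q^{n}\right)$ by linearity; you instead match the raw chain-rule expansion $D_{L}\left(Q^{i}\right)=D_{V^{i}}L\left(Q^{i}\right)+D_{J^{i}}L\left(\CD{}Q^{i}\right)$ against the pre-integration-by-parts form \eqref{eq:Symmetry variation of the Lagrangian}, so that no integration by parts in the jet variables ever occurs and only the Leibniz rule $\text{div}\left(L\cdot X_{\BM}\right)=\CD{}L\left(X_{\BM}\right)+L\,\text{div}\left(X_{\BM}\right)$ is needed. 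The two routes are equivalent---the integrations by parts hidden in Corollary \eqref{cor:Local symmetry with Euler one form} and in the paper's formula for $D_{L}\left(Q\right)$ cancel against each other---and yours is marginally more economical as a proof of the lemma in isolation. What the paper's detour buys is the component identity \eqref{eq:Single Frechet derivative}, $D_{L}\left(Q^{i}\right)=\text{div}_{\dom i}\left(\CD{\JS}L\left(X_{Q}^{i}\right)\right)+E\left(X_{Q}^{i}\right)$, which is not mere scaffolding: the proof of Theorem \eqref{thm:Main result} and the derivation of the semi-local Ward identities both cite it to extract the currents $A^{i}$ and the Euler--Lagrange terms, so the paper's proof of this lemma doubles as the derivation of that reusable formula, whereas your argument would need to supply it separately later. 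Your bookkeeping of the diagonal structure (each $Q^{i}$ perturbs only the $i$-th slot, so the sum reconstructs $\CD{\FB}L\left(X_{Q}\right)$ and $\CD{\JS}L\left(\CD{}X_{Q}\right)$ exactly once) is the same splitting the paper invokes in Eq.~\eqref{eq:Split of Frechet}, applied before rather than after the divergence rearrangement.
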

\begin{proof}
Let $Q$ denote the characteristic function of the symmetry vector
field $X$. The Fréchet derivative of the Lagrangian in the direction
of $Q$ is 
\begin{equation}
D_{L}\left(Q\right)=\text{div}\left(D_{J}L\left(X_{Q}\right)\right)+E\left(X_{Q}\right)\quad.
\end{equation}
Corollary \eqref{cor:Local symmetry with Euler one form} implies
\begin{equation}
D_{L}\left(Q\right)+\text{div}\left(L\cdot X_{M}\right)=0\quad.\label{eq:Frechet symmetry}
\end{equation}
On the other hand, since the symmetry vector field is diagonal, we
know that its characteristic vector field can be fibered as $X_{Q}=X_{Q}^{1}+\cdots+X_{Q}^{n}$
where $X_{Q}^{i}$ are the characteristic vector fields on the $i$th
factor of $\VB$. Moreover they are all equal. Denote with $\tilde{Q}$
the characteristic of the single field $X_{Q}^{i}$ and write $Q=Q^{1}+Q^{2}+\cdots+Q^{n}$
with $Q^{i}=\left(0,\cdots,\tilde{Q},\cdots0\right)=\tilde{Q}\otimes e_{i}$.
We then get 
\begin{equation}
D_{L}\left(Q\right)=D_{L}\left(Q^{1}\right)+\cdots+D_{L}\left(Q^{n}\right)\quad.\label{eq:Split of Frechet}
\end{equation}
Each $D_{L}\left(Q^{i}\right)$ is of the above form, namely 
\begin{equation}
D_{L}\left(Q^{i}\right)=\text{div}_{\dom i}\left(\CD{_{\JS}}L\left(X_{Q}^{i}\right)\right)+E\left(X_{Q}^{i}\right)\quad.\label{eq:Single Frechet derivative}
\end{equation}
Inserting Eq.\eqref{eq:Split of Frechet} in \eqref{eq:Frechet symmetry}
proves the lemma. 
\end{proof}
We can now prove the main theorem. 
\begin{proof}
Convoluting equation \eqref{eq:Single Frechet derivative} with $\delta^{i}\left(z\right)$
we get 
\begin{equation}
\int_{\Omega}D_{L}\left(Q^{i}\right)\delta^{i}\left(z\right)=\int_{\Omega}\text{div}\left(\CD{\JS}L\left(X_{Q}^{i}\right)\right)\delta^{i}\left(z\right)+\int_{\Omega}E\left(X_{Q}^{i}\right)\delta^{i}\left(z\right)\quad.\label{eq:Frechet and euler lagrange}
\end{equation}
Due to Eq. \eqref{eq:Variation of the action in almost a symmetry}\eqref{eq:Fibration of the divergence}
the only non vanishing divergence is on the subspace $\dom i\subset\BM$
on which the integrand is convoluted with $\delta^{i}\left(z\right)$
for $z\in M_{B}$. We therefore get 
\begin{equation}
\int_{\Omega}\text{div}\left(\cd{\JS}L\left(X_{Q}^{i}\right)\right)\delta^{i}\left(z\right)=\text{div}_{\dom i}\left(A^{i}\left(z\right)\right)\quad.
\end{equation}
The second term of equation \eqref{eq:Frechet and euler lagrange}
is a component of the Euler Lagrange equation applied on the vector
field $X_{Q}^{i}$. By successive adding of equation \eqref{eq:Frechet and euler lagrange}
for all $i$'s we obtain 
\begin{equation}
\sum_{i=1}^{N}\int_{\Omega}D_{L}\left(Q^{i}\right)\delta^{i}\left(z\right)=\sum_{i=1}^{N}\text{div}_{\dom i}\left(A^{i}\left(z\right)\right)+EL\left(X_{Q}\right)\quad.
\end{equation}
By lemma \eqref{lem:The-symmetry-condition in Frechet style} we replace
the left hand side by 
\begin{equation}
\sum_{i=1}^{N}\int_{\Omega}D_{L}\left(Q^{i}\right)\left[\delta^{i}\left(z\right)-\delta^{\alpha}\left(z\right)\right]-\int_{\Omega}\text{div}\left(L\cdot X_{\BM}\right)\delta^{\alpha}\left(z\right)=\sum_{i=1}^{N}\text{div}_{\dom i}\left(A^{i}\right)+EL\left(X_{Q}\right)\quad,
\end{equation}
where the term $\sum_{i=1}^{N}\text{div}_{\dom i}\left(A^{i}\right)$
is a sum of divergences on each individual domain $\dom i$ .

Conversely, if equation \eqref{eq:Equation from the main result}
holds then essentially following the previous steps backwards we arrive
at 
\begin{equation}
\partial_{\epsilon}\int_{c_{\BM\epsilon}\left(\Omega\right)}L\left(j\left(\Phi_{\epsilon}^{\alpha}\right)\right)\,\delta^{\alpha}\left(q\right)\,\VOL=0\quad,
\end{equation}
where $\Phi_{\epsilon}^{\alpha}$ is the section in which only the
$\alpha$'s field is transformed by $c_{\FB}\circ\phi^{\alpha}\circ c_{\BM}^{-1}$
and the rest $N-1$ fields are unchanged. By integration we get 
\begin{equation}
\partial_{\epsilon}\int_{c_{\BM\epsilon}\left(\Omega\right)}L\left(j\left(\Phi_{\epsilon}^{\alpha}\right)\right)\,\VOL=0\quad.
\end{equation}
Since Eq. \eqref{eq:Equation from the main result} holds for all
$\alpha$ we get, for $c_{\epsilon}$ close to the identity, 
\begin{equation}
0=\sum_{\alpha=1}^{N}\partial_{\epsilon}\int_{c_{\BM\epsilon}\left(\Omega\right)}L\left(j\left(\Phi_{\epsilon}^{\alpha}\right)\right)\,\VOL=\partial_{\epsilon}\int_{c_{\BM\epsilon}\left(\Omega\right)}L\left(j\left(\Phi_{\epsilon}\right)\right)\,\VOL=\partial_{\epsilon}\tilde{S}\left(\ef_{\epsilon}\right)\quad.
\end{equation}
The usual group argument concludes the proof. 
\end{proof}
We define $\Delta^{\alpha}\left(z\right):=\sum_{i}\int_{\Omega}D_{L}\left(Q^{i}\right)\left[\delta^{i}-\delta^{\alpha}\right]$$\left(z\right)$
and refer to it as to the \textit{non-local correction term} (to the
standard local conservation law). Theorem \eqref{thm:Main result}
states that the correction term is a divergence up to boundary terms
\begin{equation}
\Delta^{\alpha}\left(z\right)-\text{div}\left(A\right)\left(z\right)-\int_{\Omega}\text{div}\left(L\cdot X_{M}\right)\delta^{\alpha}\left(z\right)=0\quad.
\end{equation}

\subsection*{Action with multiple kernels\label{sec:action-with-multiple}}

So far we have discussed the case in which the action is given by
a single non-local Lagrangian. The usual action of non-local field
theories is the combination of two or more terms with different type
and combinatorics of non-localities. For instance the kinetic part
is typically assumed to be local, whereas the interaction part can
have non-localities and the various interaction terms have different
types of them. As an example consider again a non-relativistic field
theory of some atomic system, characterized by both a 2-body and a
3-body electromagnetic interaction with a Coulomb potential.

We can treat these cases by generalizing slightly our definition of
the non-local action. 
\begin{defn}
A non-local physical action with multiple kernels $S^{mult}\left[\phi\right]$
is a functional on $\Gamma\left(E\right)$, given by $\mult$ geometrical
actions $S_{l}^{G}$, each of which is defined on the product bundle
$\VB_{l}=f_{l}^{*}E^{\times l}$, for $l\in\left\{ 1,\cdots,\mult\right\} $.
\begin{equation}
S^{mult}\left[\phi\right]=\sum_{l=1}^{\mult}S_{l}^{G}\left[\imath_{l}\phi\right]\quad.
\end{equation}

\end{defn}
It is a straightforward calculation to check that the equations of
motion are now given by 
\begin{align}
\sum_{l}\,EL_{l}\left(q\right)\left[\cdot\right] & =0\quad,\label{eq:Multi kernel EoM}
\end{align}
where $EL_{l}$ is the Euler Lagrange form for the $l$th Lagrangian.

For the treatment of symmetries we again focus on the geometrical
side, which is now the sum of individual functionals, each of which
involve different domains of integration. It becomes difficult to
formulate the most general geometrical transformation, since the tangent
space on $\VB_{l}$ is different from $\VB_{m}$ if $l\neq m$. However,
we can make sense of a symmetry transformation, if we again restrict
ourselves only to diagonal vector fields, $X^{l}=\left(X,\cdots,X\right)$
with $X\in TE$. In this case we get 
\begin{lem}
$X\in TE$ is a symmetry vector field of $\sum_{l=1}^{\mult}S_{l}^{G}$
iff it is a symmetry of each action $S_{l}^{G}$ separately. That
is $\text{d}S_{l}^{G}\left(X^{l}\right)=0$ for each $l\in\left\{ 1,\cdots,\mult\right\} $. \end{lem}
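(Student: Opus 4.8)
The plan is to prove the two implications of the iff separately, reducing everything to the single-kernel symmetry theorem \eqref{Local and global symmetry} that has already been established. The key observation is that for a diagonal vector field $X^{l}=\left(X,\cdots,X\right)$ with $X\in TE$, each summand $S_{l}^{G}\left[\imath_{l}\phi\right]$ is exactly a single-kernel non-local action of the type treated earlier, so the local symmetry condition \eqref{eq:local symmetry equation} applies verbatim to each $S_{l}^{G}$ individually on its own bundle $\VB_{l}$.

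First I would establish the easy direction: if $X$ is a symmetry of each $S_{l}^{G}$ separately, that is $\text{d}S_{l}^{G}\left(X^{l}\right)=0$ for all $l$, then by linearity of the differential
\begin{equation}
\text{d}\left(\sum_{l=1}^{\mult}S_{l}^{G}\right)\left(X\right)=\sum_{l=1}^{\mult}\text{d}S_{l}^{G}\left(X^{l}\right)=0\quad,
\end{equation}
so $X$ is a symmetry of the total action. This direction is immediate and requires only that the variation commutes with the finite sum, which is standard.

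The harder direction is the converse: assuming $\text{d}\left(\sum_{l}S_{l}^{G}\right)\left(X\right)=0$, I want to conclude that each individual variation vanishes. The natural strategy is to invoke the local-to-global equivalence from theorem \eqref{Local and global symmetry}, which turns the global symmetry statement into the pointwise condition \eqref{eq:local symmetry equation}. Applying that theorem to the sum gives
\begin{equation}
\sum_{l=1}^{\mult}\left[\CD{}L_{l}\left(X_{\BM}\right)+\CD{\FB}L_{l}\left(X_{Q}\right)+\CD{\JS}L_{l}\left(DX_{Q}\right)+L_{l}\,\text{div}\left(X_{\BM}\right)\right]=0\quad.
\end{equation}
The main obstacle is that this gives only that the sum vanishes, not each term; to split it I would exploit the fact that the distinct kernels $L_{l}$ live on bundles $\VB_{l}$ with different degree of non-locality $l$, hence their contributions have genuinely different integration-domain structure (each $L_{l}$ is integrated over a product of $l$ copies of the base, fibred via $f_{l}$). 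Because the test-field support and the delta-supported domains $\delta^{i}$ differ across the $l$, the semi-local expressions sit on non-overlapping combinatorial sectors and must vanish independently.

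To make the independence argument rigorous I would return to the integral form before applying the fundamental lemma of variation, and observe that each $S_{l}^{G}$ contributes to the variation an integral over $\Omega_{l}=f_{l}^{-1}(U^{\times l})$, a submanifold of $\BM^{\times l}$ of its own dimension. Since these domains have distinct dimension and combinatorial structure, the vanishing of the total variation for arbitrary compactly supported $\varphi$ forces each sectoral integral to vanish separately, by the same independence reasoning used in the corollary after the fundamental lemma (where independence of the $\varphi^{j}$ for different field indices was invoked). This yields $\text{d}S_{l}^{G}\left(X^{l}\right)=0$ for each $l$, completing the proof. I expect this domain-separation step to be the crux; the rest is a direct transcription of the single-kernel machinery already developed.
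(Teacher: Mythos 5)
Your first direction is fine and matches the paper's (trivial) one. The converse, however, has a genuine gap, and it sits exactly where you say the crux is. First, you cannot apply theorem \eqref{Local and global symmetry} to the sum $\sum_{l}S_{l}^{G}$ as if it were a single geometrical action: the Lagrangians $L_{l}$ live on different bundles $\VB_{l}$ over base manifolds of different dimensions, so your displayed ``pointwise'' sum of local symmetry conditions is not an equation on any one space; this is precisely the difficulty the paper itself flags just before the lemma (the tangent space on $\VB_{l}$ differs from that on $\VB_{m}$ for $l\neq m$). Second, and more seriously, the separation mechanism you invoke does not exist: in a symmetry statement there is no arbitrary compactly supported test field $\varphi$. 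The fundamental lemma of variation (and the independence of the $\varphi^{j}$ used in the corollary on the equations of motion) applies there because the variation is taken in an arbitrary direction; here the variation is taken along the fixed vector field $X$ itself. For fixed $X$ and a fixed field configuration, each $\text{d}S_{l}^{G}\left(X^{l}\right)$ is just a real number, and numbers attached to domains of ``different dimension and combinatorial structure'' can perfectly well sum to zero without vanishing individually. So the claimed forcing does not follow.

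The freedom that actually does the work --- and the one the paper uses --- is the freedom in the integration domains, which is built into the definition of a symmetry (invariance of $S_{\Omega}$ for every subdomain). Since the $\mult$ terms carry independent domains $\Omega_{l}$, one fixes $\Omega_{l}$ for $l\neq i$ and varies $\tilde{\Omega}_{i}\subset\Omega_{i}$; the total symmetry condition then says that $\text{d}S_{\tilde{\Omega}_{i}}\left(X^{i}\right)$ is a constant independent of the choice of $\tilde{\Omega}_{i}$, and additivity of the action over disjoint domains,
\begin{equation}
\text{d}S_{\Omega'}\left(X^{i}\right)-\text{d}S_{\Omega''}\left(X^{i}\right)=\text{d}S_{\Omega'\backslash\Omega''}\left(X^{i}\right)\quad,
\end{equation}
applied to nested subdomains forces that constant to vanish. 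If you replace your fundamental-lemma step by this domain-variation argument, your proof goes through; without it, the converse direction remains unproven.
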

\begin{proof}
Let $c_{E\epsilon}^{l}$ and $c_{M\epsilon}^{l}$ be the flows of
$X^{l}$. The variation of the generalised geometrical action is 
\begin{equation}
\partial_{\epsilon=0}\sum_{l=1}^{\mult}\tilde{S}_{l,\epsilon}^{G}\left(\phi_{\epsilon}\right)=0\quad.
\end{equation}
Each action contains an independent domain of integration. Since we
defined the symmetry as a transformation that does not change the
action for all subdomains, each term has to vanish separately. This
can be seen in the following way. For $i\in\left\{ 1,\cdots,\mult\right\} $
the variation of the $i$th action has to satisfy 
\begin{equation}
\text{d}S_{\Omega_{i}}\left(X^{i}\right)=-\sum_{l\neq i}^{\mult}\text{d}S_{\Omega_{l}}\left(X^{l}\right)\quad.
\end{equation}
Fixing $\Omega_{l\neq i}$, we get 
\begin{equation}
\text{d}S_{\Omega_{i}}\left(X^{i}\right)=\text{const}\quad\forall\tilde{\Omega}_{i}\subset\Omega_{i}\quad.
\end{equation}
Choosing $\Omega''\subset\Omega'\subset\Omega_{i}$ and using the
definition of the action we get 
\begin{align}
0 & =\text{d}S_{\Omega'}\left(X^{i}\right)-\text{d}S_{\Omega''}\left(X^{i}\right)=\text{d}S_{\Omega''}\left(X^{i}\right)+\text{d}S_{\Omega'\backslash\Omega''}\left(X^{i}\right)-\text{d}S_{\Omega''}\left(X^{i}\right)\nonumber \\
 & =\text{d}S_{\Omega'\backslash\Omega''}\left(X^{i}\right)\quad.
\end{align}
Taking $\tilde{\Omega}\subset\Omega'\backslash\Omega''\subset\Omega_{i}$
leads to 
\begin{equation}
\text{d}S_{\tilde{\Omega}}\left(X^{i}\right)=0\quad,
\end{equation}
which determines the constant. Applying the same reasoning for the
remaining part $\sum_{l\neq i}^{\mult}\text{d}S_{\Omega_{l}}\left(X^{l}\right)=0$
proves one direction of the lemma. The opposite direction is trivial. 
\end{proof}
For each action we get a symmetry equation of the form, 
\begin{align}
EL_{l}\left[X_{Q}\right]\left(q\right)+\sum_{i=1}^{N}\text{div}_{\dom i}\left(A^{i,l}\right)\left(q\right)_{l}+\int_{\Omega}\text{div}_{\BM}\left(B^{l}\right)\delta^{\alpha}\left(q\right)_{l}-\Delta^{\alpha,l}\left(q\right) & =0\quad.
\end{align}
The explicit form of $A^{i.l},\,B^{l}$ and $\Delta^{\alpha,l}$ is
given in the following theorem. Summing over $l$ we obtain the generalised
conservation laws for the multiple action which leads us to our final
result. 
\begin{thm}
Let $S$ be the non-local physical action given by a sum of geometrical
actions as $\sum_{l=1}^{\mult}S_{l}^{G}$ . Let $X$$^{l}=X^{\times l}$
be the symmetry of the $l$th geometrical action. The generalised
conservation law for this symmetry takes the form 
\begin{equation}
EL\left[X_{Q}\right]\left(q\right)+\sum_{l=1}^{\mult}\left\{ \sum_{i=1}^{N}\text{div}_{\dom i}\left(A^{i,l}\right)\left(q\right)_{l}+\int_{\Omega}\text{div}_{\BM}\left(B^{l}\right)\delta^{\alpha}\left(q\right)_{l}-\Delta^{\alpha,l}\left(q\right)\right\} =0\quad,
\end{equation}
with $A^{i,l}$ being 
\begin{equation}
A^{i,l}\left(q\right)=\int_{\Omega}\CD{\JS^{i}}L^{l}\left[X^{i}\right]\delta^{i}\left(z\right)\,\VOL^{l}\quad,
\end{equation}
$B^{l}\in TM$ as 
\begin{equation}
B=L^{l}\cdot X^{l}\quad,
\end{equation}
and the non-local correction term 
\begin{align}
\Delta^{\alpha,l}\left(z\right) & =\sum_{i=1}^{l}\int_{\Omega}D_{L}\left(Q^{i}\right)\left[\delta^{i}-\delta^{\alpha}\right]\left(z\right)\,\VOL^{l}\quad.
\end{align}

\end{thm}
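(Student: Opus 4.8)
The plan is to reduce the multiple-kernel statement to a kernel-by-kernel application of the single-kernel result, Theorem~\ref{thm:Main result}, and then sum over the kernel index $l$. The decisive structural input is the preceding lemma, which guarantees that the diagonal vector field $X\in T\vb$ is a symmetry of $\sum_{l=1}^{\mult}S_{l}^{G}$ if and only if it is a symmetry of each geometrical action $S_{l}^{G}$ separately. Since the hypothesis of the theorem already supplies $X^{l}=X^{\times l}$ as a symmetry of every $S_{l}^{G}$, this decoupling lets me treat each kernel in complete isolation and only recombine the results at the very end.

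First I would fix $l$ and note that $S_{l}^{G}$ is precisely an ordinary single non-local geometrical action on the product bundle $\VB_{l}=f_{l}^{*}\vb^{\times l}$, of degree of non-locality $l$, on which $X^{l}$ acts as a diagonal symmetry vector field with single-field characteristic $Q$. All hypotheses of Theorem~\ref{thm:Main result} are therefore met on $\VB_{l}$, and applying it verbatim yields, for each admissible $\alpha$,
\begin{equation}
EL_{l}\left[X_{Q}\right]\left(q\right)+\sum_{i}\text{div}_{\dom i}\left(A^{i,l}\right)\left(q\right)_{l}+\int_{\Omega}\text{div}_{\BM}\left(B^{l}\right)\delta^{\alpha}\left(q\right)_{l}-\Delta^{\alpha,l}\left(q\right)=0\quad,
\end{equation}
where $A^{i,l}$, $B^{l}$ and $\Delta^{\alpha,l}$ are nothing but the kernel-$l$ instances of the quantities constructed in the single-kernel theorem, built respectively from $\CD{\JS}L^{l}$, from $L^{l}X^{l}$, and from the Fr\'echet derivatives $D_{L}\left(Q^{i}\right)$ convoluted with $\delta^{i}-\delta^{\alpha}$ against the volume form $\VOL^{l}$.

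Next I would sum these $\mult$ identities over $l$. The bracketed quantities collect directly into $\sum_{l=1}^{\mult}\{\cdots\}$, while the leading terms produce $\sum_{l}EL_{l}\left[X_{Q}\right]$; by the multiple-kernel equations of motion, Eq.~\eqref{eq:Multi kernel EoM}, the total Euler--Lagrange form is $EL=\sum_{l}EL_{l}$, so that $\sum_{l}EL_{l}\left[X_{Q}\right]=EL\left[X_{Q}\right]$. The summed identity is then exactly the asserted generalised conservation law. Because $X$ is a symmetry of each $S_{l}^{G}$ by hypothesis, no converse argument is strictly needed; had one been required, it would proceed by the same backward reasoning as in Theorem~\ref{thm:Main result}, applied separately to each kernel and reassembled through the decoupling lemma.

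The step I expect to be the genuine obstacle is the bookkeeping across kernels rather than any new analytic ingredient. Each $S_{l}^{G}$ lives on its own pull-back bundle $\VB_{l}$, with its own number of non-local points, its own volume form $\VOL^{l}$, and its own family of single-factor domains $\dom i$ and divergences $\text{div}_{\dom i}$; moreover the single-field characteristic $Q$ must be distributed as $Q^{i}=\tilde{Q}\otimes e_{i}$ on each $\VB_{l}$, with a number of components that varies with $l$. The one point that genuinely requires care is the free index $\alpha$: Theorem~\ref{thm:Main result} holds for every $\alpha$ independently on each bundle, so one must decide whether $\alpha$ is fixed globally or allowed to depend on $l$, and verify that the resulting sum still expresses a single coherent continuity equation. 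Once these identifications are made consistent, the summation is purely formal and the theorem follows.
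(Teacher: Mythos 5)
Your proposal is correct and takes essentially the same route as the paper: the paper likewise uses the preceding decoupling lemma to apply the single-kernel result (Theorem \ref{thm:Main result}) to each geometrical action $S_{l}^{G}$ separately, obtaining the per-kernel symmetry equation displayed just before the theorem, and then sums over $l$, identifying $\sum_{l}EL_{l}\left[X_{Q}\right]$ with the total Euler--Lagrange form of the multi-kernel action. Your closing remarks on bookkeeping (the fixed index $\alpha$ and the $l$-dependent degree of non-locality) only make explicit what the paper leaves implicit.
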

For physical reasons we separate a special case as a corollary. 
\begin{cor}
Let the non-local physical action be composed of a local and a single
non-local part, which we call the interaction. And let $X^{\text{loc}}$
and $X^{\text{nloc}}=X^{\text{loc}\times N}$ be the symmetries of
the local and the non-local geometrical action, respectively. The
corresponding generalised conservation law reads 
\begin{equation}
\text{div}_{M}\left(A^{\text{loc}}+B^{\text{loc}}\right)\left(q\right)+\sum_{i=1}^{N}\text{div}_{\dom i}\left(A^{i,\text{int}}\right)\left(q\right)+\int_{\Omega}\text{div}_{\BM}\left(B^{\text{int}}\right)\delta^{\alpha}\left(q\right)-\Delta^{\alpha,\text{int}}\left(q\right)=0\quad,
\end{equation}
with $A$, $B$ and $\Delta$ as above. 
\end{cor}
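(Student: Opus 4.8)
The plan is to read this corollary off from the preceding multiple-kernel theorem by taking $\mult=2$ and letting the two geometrical actions be the local kinetic term, of non-locality degree $N=1$, and the single interaction term, of degree $N$. First I would check that the hypotheses line up: by the lemma immediately preceding that theorem, the diagonal field $X$ is a symmetry of $S^{\text{loc}}_{G}+S^{\text{int}}_{G}$ precisely when it is separately a symmetry of each summand, which is exactly the stated assumption that $X^{\text{loc}}$ and $X^{\text{nloc}}=X^{\text{loc}\times N}$ are symmetries of the local and the interaction actions. Hence the multiple-kernel generalised conservation law applies directly, and the remaining work is to evaluate its two blocks of terms and to remove the Euler--Lagrange contribution using the equations of motion.

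For the local block ($l=\text{loc}$) I would exploit the degenerate kinematics of $N=1$: here $\imath_{1}=\text{id}$, so $\BM=\bm$, $\Omega=U$, and the single domain $\dom 1$ is the whole base manifold, which by \eqref{eq:Local equations of motion} also gives $E=EL$ for this piece. The correction term then collapses,
\begin{equation}
\Delta^{\alpha,\text{loc}}=\sum_{i=1}^{1}\int_{\Omega}D_{L}\left(Q^{i}\right)\left[\delta^{i}-\delta^{\alpha}\right]=\int_{\Omega}D_{L}\left(Q^{1}\right)\left[\delta^{1}-\delta^{1}\right]=0\quad,
\end{equation}
since with one factor the only available index is $\alpha=1$; and the two divergences reduce to $\bm$, with $A^{1,\text{loc}}=\CD{\JS}L^{\text{loc}}\left(X\right)$ and $B^{\text{loc}}=L^{\text{loc}}\cdot X$, so that the block yields precisely $\text{div}_{M}\left(A^{\text{loc}}+B^{\text{loc}}\right)$, the ordinary local Noether current. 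The interaction block ($l=\text{int}$) is nothing but theorem \eqref{thm:Main result} applied to the single non-local kernel, and I would carry it over verbatim as $\sum_{i=1}^{N}\text{div}_{\dom i}\left(A^{i,\text{int}}\right)+\int_{\Omega}\text{div}_{\BM}\left(B^{\text{int}}\right)\delta^{\alpha}-\Delta^{\alpha,\text{int}}$.

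Finally I would dispose of the overall Euler--Lagrange term. The multiple-kernel law carries the total $EL\left[X_{Q}\right]=EL_{\text{loc}}\left[X_{Q}\right]+EL_{\text{int}}\left[X_{Q}\right]$ outside the sum over $l$, and on solutions this vanishes by the multiple-kernel equations of motion \eqref{eq:Multi kernel EoM}, $\sum_{l}EL_{l}\left[X_{Q}\right]=0$. Substituting the two simplified blocks and dropping $EL\left[X_{Q}\right]$ then reproduces the stated continuity equation. The step that deserves care is this last one: the local current $A^{\text{loc}}+B^{\text{loc}}$ is \emph{not} separately conserved, because it is the combined condition $EL_{\text{loc}}+EL_{\text{int}}=0$, and not $EL_{\text{loc}}=0$ alone, that closes the identity; physically the non-local interaction sources the divergence of the local current, and one must keep track of the fact that a single total $EL\left[X_{Q}\right]$, rather than a separate one for each block, is what the equations of motion annihilate.
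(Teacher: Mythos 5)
Your proposal is correct and takes essentially the same route as the paper, which states this corollary without any separate proof, treating it exactly as you do: the $\mult=2$ specialization of the multiple-kernel theorem, with the local block ($N=1$) collapsing to the ordinary Noether divergence $\text{div}_{M}\left(A^{\text{loc}}+B^{\text{loc}}\right)$ because $\Delta^{\alpha,\text{loc}}=0$, and the interaction block carried over verbatim. Your explicit disposal of the total $EL\left[X_{Q}\right]$ term via the multiple-kernel equations of motion \eqref{eq:Multi kernel EoM} — together with the warning that only the \emph{sum} $EL_{\text{loc}}+EL_{\text{int}}$ vanishes on-shell, not each piece — supplies a step the paper drops silently, so your write-up is, if anything, more careful than the original.
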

One sees that in this case the non-local correction originates only
from the non-local interaction term. We will explicitly apply this
result in the next section.

\subsection*{On conserved charges}

In local field theories in which the base manifold is associated with
space-time itself, one can integrate the conservation law over the
spatial degrees of freedom. The resulting quantity is then a function
of time only, the \lq charge\rq. It is a consequence of the local
conservation law that the charge is conserved, meaning that its time
derivative vanishes. In the Hamiltonian language charges become the
generators of the symmetry transformation, and might completely determine
the dynamics of the system.

In more abstract field theories, such as group field theories, where
the base manifold has no a priori relation to space-time, the notion
of \lq conserved\rq~ charges has no obvious counterpart.  Nevertheless,
in models in which the combinatorial structure of non-localities is
such that the theory is local in one of the parameters (for example
as it is the case for the time submanifold in non-relativistic field
theories) the integral over the correction term $\Delta$ vanishes,
as it can be easily verified, and the corresponding generalised charge
will be conserved with respect to the local parameter. However, in
this generalized sense it is not clear if the charge will correspond
to the canonical structure of the theory, since due to the absence
of time the canonical structure is not well defined. 

The analysis of such generalised notion of conserved charges needs
to be carried out in greater detail than we are doing here, as it
has important consequences. In space-time-based non-local field theories
the notion of charges can help to define the Hamiltonian structure
in the absence of a unique notion of time \cite{Pauli:1953aa}, but
it may also simplify the equations of motion by reducing the degree
of differential operators, as it does in local theories (see \cite{Olver:1998aa}
for local examples).

In more abstract field theories the notion of charges will in general
be very different from ordinary field theory due to the lack of direct
physical interpretation of the background manifold. However, it can
lead to useful insights both from the fundamental as well as from
the computational point of view.

\section{Physical examples\label{sec:Physical-examples}}

We now apply our results to a non-local model of a complex scalar
field with a non-local two-body interaction. Such model often appears
in many-body quantum systems such as hydrodynamics, the theory of
Bose-Einstein condensates or solid state physics. In this section
we first introduce the action and the geometrical space according
to our definitions. We then consider two physical symmetries and calculate
the corresponding currents, together with their generalised conservation
laws. Subsequently we show that for a particular choice of the interaction,
the action can be rewritten as a local action, for which the usual
Noether theorem holds. We conclude the section by comparing the currents
of the local and non-local theories explicitly showing the appearance
of the non-local correction term.

Our main purpose here is not to analyze the corresponding physics
or explore the physical consequences of symmetries and their conservations
laws for such systems. It is only to illustrate in some detail, on
one specific example of a clear physical significance, the general
formalism we have developed for carrying out such analysis in generic
non-local field theories.

\subsection{Non-local theory }

We consider a model given by a complex scalar field $\phi\left(t,x\right)$
and the non-local physical action 
\begin{align}
S & =\int\text{d}t\text{d}x\,\imath\phi^{*}(t,x)\partial_{t}\phi(t,x)+\nabla\phi^{*}(t,x)\cdot\nabla\phi(t,x)\nonumber \\
 & +\int\text{d}t\text{d}x\text{d}y\,V(x,y)\vert\phi(t,x)\vert^{2}\ \vert\phi(t,y)\vert^{2}\quad.\label{eq:Non local action}
\end{align}
We can think of this system as describing the hydrodynamic approximation
of an atomic fluid. The local part of this model is the usual non-relativistic
kinetic term. The non-local part describes the instantaneous interaction
of two atoms at distinct space points. The nature of the interaction
is determined by the potential term $V\left(x,y\right)$, which, for
now, we leave unspecified except for the assumption that it vanishes
sufficiently fast at infinity. The fields in such models are assumed
to be normalized to the number of atoms in the location $x$. The
density of such a continuous system is defined by the modulus square
of the field $\rho\left(t,x\right)=|\phi\left(t,x\right)|^{2}$.

The theory is local in time and non-local in space. The geometrical
action is of the multi kernel type with the local and interaction
part given by 
\begin{align}
S^{1} & =\int\text{d}t\text{d}x\,\imath\phi^{*}(t,x)\partial_{t}\phi(t,x)+\nabla\phi^{*}(t,x)\cdot\nabla\phi(t,x)\\
S^{2} & =\int\int\text{d}t\text{d}x\text{d}y\,V(x,y)\vert\phi(t,y)\vert^{2}\ \vert\phi(t,y)\vert^{2}\quad.
\end{align}
The bundle structure of $\VB$ can be represented as in Figure \eqref{fig:Bundle-structure.-Lower},
where the solid line denotes the base manifold and the dashed line
refer to the fibers.

\begin{figure}
\begin{centering}
\begin{tikzpicture}

\begin{scope}[xshift=-5cm]
\draw[line width=1pt] (0,0) -- node [above] {$(\mathbb{R},t)$} ++(1,0);
\draw[line width=1pt] (1.4,0) -- node [above] {$(\mathbb{R}^3,x)$} ++ (1.1,0);
\draw[dashed] (1.2,0.2) -- (1.2,1.3) node [right] {$V$};
\draw (-0.2,-0.2) -- (-0.2,1.5) -- (2.8,1.5) -- (2.8,-0.2) -- cycle;
\node at (1.2,0) {$\times$};
\node at (0,1.3) {$E$};
\end{scope}

\begin{scope}[xshift=-5cm, yshift=2cm]
\draw[line width=1pt] (0,0) -- node [above] {$(\mathbb{R},t)$} ++(1,0);
\draw[line width=1pt] (1.4,0) -- node [above] {$(\mathbb{R}^3,y)$} ++ (1.3,0);
\draw[dashed] (1.2,0.2) -- (1.2,1.3) node [right] {$V$};
\draw (-0.2,-0.2) -- (-0.2,1.5) -- (2.8,1.5) -- (2.8,-0.2) -- cycle;
\node at (1.2,0) {$\times$};
\node at (0,1.3) {$E$};
\end{scope}

\begin{scope} [xshift=1cm, yshift=1cm]
\draw[line width=1pt] (0,0) -- node [below] {$(\mathbb{R}^3,x)$} (1.3,0);
\draw[line width=1pt] (1.7,0) -- node [below] {$(\mathbb{R},t)$} (2.6,0);
\draw[line width=1pt] (3,0) -- node [below] {$(\mathbb{R}^3,y)$} (4.3,0);
\draw[dashed] (1.5,0.3) --  (1.5,1.5);
\draw[dashed] (2.8,0.3) --  (2.8,1.5);

\draw (-0.2,-0.6) -- (-0.2,1.8) -- (4.5,1.8) -- (4.5,-0.6) -- cycle;
\node at (1.5,0) {$\times$};
\node at (2.8,0) {$\times$};
\node at (0,1.5) {$\mathcal{E}$};
\end{scope}

\draw[->,xshift=-1.7cm,>=stealth] (0,1.6) -- node [above] {$f^{*}$} (2,1.6);

\end{tikzpicture} 
\par\end{centering}

\protect\protect\caption{Bundle structure. Left hand side is the bundle with trivial combinatorics.
The right hand side denote the combinatorics of the pull back bundle
with the embedding map $f$ as in Eq. \eqref{eq:an example for f}.
Lower case letters denote the coordinate names. \label{fig:Bundle-structure.-Lower}}
\end{figure}
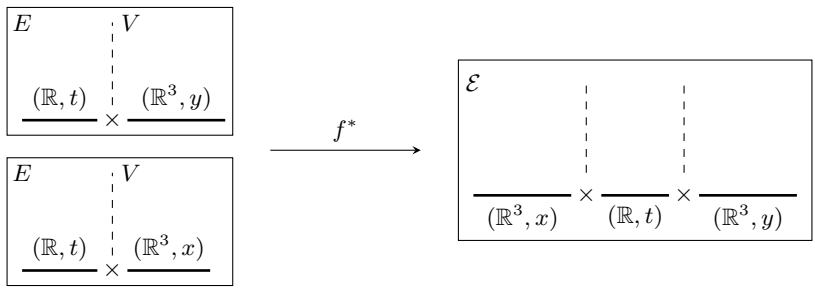

Explicitly the geometric structure of this example is given by the
pull back of two copies of the vector bundle $\vb=M\times\mathbb{C}$
with the combinatorial function $f$. The base manifold $M$ in this
example is the product manifold of time and space coordinates $\bm=\mathbb{R}\times\mathbb{R}^{3}$.
The fiber is a vector space of complex numbers $\mathbb{C}$. And
the combinatorial structure of the theory is given by the following
function 
\begin{equation}
f=\left(\mathds{1}\times\text{Per}\times\mathds{1}\right)\left(\text{Di}\times\mathds{1}\times\mathds{1}\right)\quad.
\end{equation}
In this way the full jet bundle writes as
\begin{equation}
\mathcal{J}\left(E^{P}\right)\simeq\overbrace{\underbrace{\overbrace{\mathbb{R}\times\mathbb{\mathbb{R}}^{3}\times\mathbb{R}^{3}}^{\text{Base manifold}\,\BM}}_{\left(t,x,y\right)}\times\overbrace{\underbrace{\mathbb{C}}_{\left(u\right)}\times\underbrace{\mathbb{C}}_{\left(v\right)}}^{\text{Fibers}\,\FB}}^{\VB}\times\overbrace{\underbrace{\mathbb{C}\times\mathbb{C}^{3}}_{\left(u_{t},u_{x}\right)}\times\underbrace{\mathbb{C}\times\mathbb{C}^{3}}_{\left(v_{t},v_{x}\right)}}^{\text{Jet spaces}}\quad,
\end{equation}
where the under-brace denote the coordinate convention that we use
for each of the spaces.

In these coordinates the geometrical Lagrangians of $S^{1}$ and $S^{2}$
become 
\begin{align}
L^{\text{loc}} & =\imath u^{*}u_{t}+u_{x}^{*}u_{x}\\
L^{\text{nloc}} & =V\left(x,y\right)|u|^{2}\,|v|^{2}\quad.
\end{align}

\subsubsection{$U(1)$ symmetry}

We now consider the symmetry of the action under the phase rotation
of the fields $\phi$. The geometrical action is symmetric under the
transformation $c_{\fb\epsilon}\left(v\right)=e^{\imath\epsilon\theta}v$
and $c_{\bm\epsilon}=\mathds{1}$. Its infinitesimal generator is
given by 
\begin{align}
X & =\imath\alpha v\,\partial_{v}-\imath\alpha v^{*}\partial_{v^{*}}\quad,
\end{align}
which is already in its characteristic form. The characteristic function
$Q$ is therefore $Q=\left(\imath\alpha v,-\imath\alpha v^{*}\right)$.
Using this it is immediate to notice that the non-local correction
term $\Delta^{1}$ to the conservation law vanishes 
\begin{align}
\Delta^{1} & =\partial_{\epsilon}V\left(x,y\right)|u|^{2}|v+\epsilon Q\left(v\right)|^{2}=V\left(x,y\right)|u|^{2}|v\vert^{2}\left(\imath\alpha-\imath\alpha\right)=0\quad.
\end{align}
We therefore obtain the continuity equation 
\begin{equation}
\text{div}\left(J^{\text{loc}}\right)+\text{div}\left(J^{\text{nloc}}\right)=0\quad,
\end{equation}
where a priori one has an additional non-local contribution to the
standard local one. However, since $D_{J}L^{\text{nloc}}=0$ as well
as $X_{M}=0$, the non-local currents vanish, $J^{\text{nloc}}=0$,
leading to the standard conservation law 
\begin{equation}
EL\left(X_{Q}\right)+\text{div}\left(J^{\text{loc}}\right)=0\quad,
\end{equation}
with the well-known conserved current 
\begin{align}
J^{\text{loc}} & =-\vert u\vert^{2}\,\partial_{t}+\imath\left(u_{x}^{*}u-u^{*}u_{x}\right)\partial_{x}\quad.
\end{align}
Alternatively inserting the coordinates $u=\phi(t,x)$ the continuity
equation on-shell gets the usual form 
\begin{align}
\text{div}\left(J^{\text{loc}}\right)=-\partial_{t}\big(\phi^{*}\phi\big)+\imath\,\text{div}_{x}\big(\phi\nabla\phi^{*}-\phi^{*}\nabla\phi\big) & =0\quad,
\end{align}
where $\nabla$ is the spatial gradient.

\subsubsection{Translation invariance}

If we assume that the potential $V$ depends only on the distance
between the points $V(x,y)=V(\vert x-y\vert)$, the action also becomes
translation invariant. That means that the geometrical action is invariant
under a symmetry transformation with a vector field of the following
form 
\begin{align}
X & =\epsilon^{t}\partial_{t}+\epsilon^{1}\partial_{x^{1}}+\epsilon^{2}\partial_{x^{2}}+\epsilon^{3}\partial_{x^{3}}=:\epsilon^{t}\partial_{t}+\vec{\epsilon}\cdot\partial_{x}\quad.
\end{align}
The characteristic of the vector field has now two components 
\begin{align}
Q^{2} & =-(u_{t}\epsilon^{t}+u_{x}\cdot\vec{\epsilon}) & Q^{2*} & =-(v_{t}^{*}\epsilon^{t}+v_{x}^{*}\cdot\vec{\epsilon})\quad,
\end{align}
where we denote the scalar product between the 3-dimensional jet coordinate
$u_{x}$ and $\vec{\epsilon}\in\mathbb{R}^{3}$ with the dot. The
Fréchet derivative is 
\begin{align}
D_{L}\left(Q^{2}\right) & =-V\left(x-y\right)|u|^{2}\left(\epsilon^{t}\vert v_{t}\vert^{2}+\vec{\epsilon}\cdot\vert v_{x}\vert^{2}\right)\quad.
\end{align}
The non-local correction term reads 
\begin{align}
\Delta^{1}\left(t,z\right) & =-\int_{\mathbb{R}^{3}\times\mathbb{R}^{3}}V\left(x-y\right)|u|^{2}\left(\epsilon^{t}\vert v_{t}\vert^{2}+\vec{\epsilon}\cdot\vert v_{x}\vert^{2}\right)\left[\delta\left(z-y\right)-\delta\left(z-x\right)\right]\quad.
\end{align}
By Corollary 3, it follows that 
\begin{align}
\text{div}_{t,z}\left(T^{\text{loc}}\cdot\epsilon\right) & +\int\text{d}x\,\text{div}_{t,z,x}\left(L^{\text{nloc}}\left(t,z,y\right)\cdot\epsilon\right)=\Delta^{1}\left(t,z\right)\quad.
\end{align}
Observing the $t,z$ divergence in the energy momentum tensor, we
get 
\begin{equation}
\text{div}_{t,z}\left(T\cdot\epsilon\right)=\Delta^{1}\left(t,z\right)-\int\text{d}x\,\text{div}_{x}\left(L^{\text{nloc}}\left(t,z,y\right)\cdot\epsilon^{x}\right)\quad,
\end{equation}
with the energy-momentum tensor of the non-local theory given by 
\begin{equation}
T_{\mu\nu}\left(z\right)=T_{\mu\nu}^{\text{loc}}\left(z\right)+\int_{\mathbb{R}^{3}}\text{d}y\,V\left(z-y\right)\rho\left(t,z\right)\rho\left(t,y\right)\delta_{\mu\nu}\quad.\label{eq:Non local energy momentum}
\end{equation}
By partial integration and again using the coordinates $u=\phi\left(t,x\right)$,
we get

\begin{align}
\text{div}\left(T\cdot\epsilon\right) & =\int\text{d}x\,\left(\epsilon\cdot\nabla_{z}\right)V\left(z-x\right)\rho\left(t,z\right)\rho\left(t,x\right)-V\left(x-z\right)\rho\left(t,x\right)\left(\epsilon\cdot\nabla_{z}\right)\rho\left(t,z\right)\nonumber \\
 & +\int\text{d}x\,V\left(z-x\right)\rho\left(t,z\right)\left(\epsilon_{t}\cdot\partial_{t}\right)\rho\left(t,x\right)-V\left(x-z\right)\rho\left(t,x\right)\left(\epsilon_{t}\cdot\partial_{t}\right)\rho\left(t,z\right)\quad.\label{eq:Energy momentum with correction term}
\end{align}

Note that, in the last equation, no boundary terms are present since
they canceled by moving the gradient of the first term to the potential.
Notice also that, if the potential $V$ is a contact interaction,
that is if $V\left(x-y\right)=\delta\left(x-y\right)$, the interaction
term becomes local and the terms on the right-hand-side of Eq. \eqref{eq:Non local energy momentum}
cancel each other, leading to a conserved energy-momentum tensor.

Defining the mean value as $\langle\circ\rangle_{t}=\int\text{d}x\,\circ\,\rho\left(t,x\right)$
, we can rewrite equation \eqref{eq:Energy momentum with correction term}
as 
\begin{equation}
\text{div}\left(T\cdot\epsilon\right)=\rho\left(t,z\right)\epsilon\cdot\nabla_{t,z}\langle V\rangle\left(t,z\right)-\epsilon\cdot\nabla_{t,z}\rho\left(t,z\right)\langle V\rangle\left(t,z\right)\quad,
\end{equation}
which admits a nice physical interpretation. The divergence of the
energy-momentum tensor is given by the mean force $\nabla_{t,x}\langle V\rangle$
due to non-local interactions and the dilution force $-\nabla_{t,z}\rho$
due to density fluctuations. Assuming a contact interaction, $V\left(x,y\right)=\delta\left(x-y\right)$,
we obtain that the right-hand-side of the above equation cancels.
In this case the mean force at a point has the same strength but the
opposite direction as the dilution force. In the non-local case on
the other hand the balance is broken since the fields at a distance
induce an external mean field at each point turning the physical situation
into that of a system in an external potential.

\subsection{Additional note: underlying local theory}

In this specific example, the non-local theory can be understood as
an approximation of a local theory with additional degrees of freedom.
We use this fact to give an interpretation for the additional terms
in our generalised conservation law, as compared to the usual one.
In fact, this is nothing else than a simpler case of the step from
a theory of atoms interacting via a Coulomb potential to a description
of the same atoms interacting via an electromagnetic field. Let us
point out, however, that, on the one hand, we do not know if such
rewriting of non-local field theories in terms of local ones is always
possible, and that, on the other hand, our results concerning non-local
theories do not rely on this possibility in any way.

We now present the local theory from which the above non-local emerges
and show how the local, conserved currents relate to the non-local
correction term. The Lagrangian of the local theory is given by 
\begin{equation}
L=\imath\phi^{*}\partial_{t}\phi+\nabla\phi^{*}\cdot\nabla\phi+\alpha\vert\phi\vert^{2}A+\beta\nabla A\cdot\nabla A\quad,\label{eq:Local lagrangian}
\end{equation}
with an auxiliary real scalar field $A$. The equation of motion for
$A$ gives 
\begin{equation}
\Delta_{x}A\left(x,t\right)=\frac{\alpha}{2\beta}\vert\phi\vert^{2}\left(x,t\right)\quad,
\end{equation}
which can be solved via the Green function 
\begin{align}
G\left(x,y\right) & =-\frac{1}{4\pi\vert x-y\vert}\quad.
\end{align}
We get 
\begin{equation}
A\left(x,t\right)=\frac{\alpha}{2\beta}\int_{\mathbb{R}^{3}}\text{d}y\,\rho\left(t,y\right)G\left(y,x\right)\quad.\label{eq:Solution to local A}
\end{equation}
Inserting the solution into the Lagrangian Eq.\eqref{eq:Local lagrangian},
we get 
\begin{equation}
L=L^{\text{loc}}+\frac{\alpha^{2}}{2\beta}\int_{\mathbb{R}^{3}}\text{d}y\,\rho\left(t,x\right)\rho\left(t,y\right)G\left(x,y\right)+\beta\cdot\text{div}\left(A\nabla A\right)\quad.\label{eq:equivalence of lagrangians}
\end{equation}
Calling $V\left(x-y\right):=\frac{\alpha^{2}}{2\beta}G\left(x,y\right),$
we get a non-local action from the form of Eq. \eqref{eq:Non local action}
plus a total divergence, which does not change the equations of motion.
The two actions are therefore equivalent.

Since the local and the non-local action are equivalent, we can calculate
the currents of the local model and compare them to the non-local
case.

The $U\left(1\right)$ current of the local action is the same as
for the non-local case since the real field $A$ does not change under
the transformation $\phi\mapsto\phi_{\epsilon}=c_{E\epsilon}\circ\phi$.
The energy-momentum tensor, however, is now conserved but depends
on $A$. 
\begin{equation}
\text{div}\left(T^{\phi}\cdot\epsilon+T^{A}\cdot\epsilon\right)=0\quad.
\end{equation}
The $A$-dependent part of the energy-momentum tensor reads 
\begin{equation}
\partial_{\mu}T_{\mu\nu}^{A}\,\epsilon^{\nu}=-\beta\partial_{i}\left(\partial^{i}A\,\partial_{\nu}A\right)\cdot\epsilon^{\nu}+\partial_{\mu}\left(L^{A}\cdot\delta_{\mu\nu}\right)\cdot\epsilon^{\nu}\quad,\label{eq:Local energy momentum}
\end{equation}
where we have explicitly written the divergence in components, using
greek letters to range over $\left(t,x^{1},x^{2},x^{3}\right)$ and
latin indices to range only over the spatial part.

Using the notation $\nu=\left(t,j\right)$, we calculate the first
term $\partial_{i}\left(\partial^{i}A\,\partial_{\nu}A\right)$ as
\begin{align}
\partial_{i}\left(\partial^{i}A\,\partial_{\nu}A\right) & =\Delta A\,\partial_{t}A-A\,\partial_{t}\Delta A+\Delta A\,\partial_{j}A-A\partial_{j}\Delta A+\partial^{i}\left(A\,\partial_{i}\partial_{t}A\right)+\partial^{i}\left(A\,\partial_{i}\partial_{j}A\right)\nonumber \\
 & =\frac{1}{\beta}\left[\int V\left(y,z\right)\rho\left(t,z\right)\partial_{t}\rho\left(t,y\right)\text{d}y-\int V\left(y,z\right)\partial_{t}\rho\left(t,z\right)\rho\left(t,y\right)\text{d}y\right]\nonumber \\
 & +\frac{1}{\beta}\left[\int\partial_{j}^{z}V\left(z,y\right)\rho\left(t,z\right)\rho\left(t,y\right)\text{d}y-\int V\left(y,z\right)\partial_{j}^{z}\rho\left(t,z\right)\rho\left(t,y\right)\text{d}y\right]\nonumber \\
 & +\partial_{\mu}\left(A\,\partial_{\nu}\partial^{\mu}A\right)\quad,\label{eq:Local energy momentum tensor part}
\end{align}
where the first equation follows from partial integration and the
Schwarz theorem, used to interchange the derivatives. Inserting Eq.\eqref{eq:Local energy momentum tensor part}
into Eq.\eqref{eq:Local energy momentum}, we get the divergence of
the energy-momentum tensor in the local theory 
\begin{align}
\text{div}_{t,z}\left(T^{A}\cdot\epsilon\right) & =-\left[\int V\left(x,z\right)\rho\left(t,z\right)\left(\epsilon_{t}\cdot\partial_{t}\right)\rho\left(t,x\right)\text{d}x-\int V\left(x,z\right)\left(\epsilon_{t}\cdot\partial_{t}\right)\rho\left(t,z\right)\rho\left(t,x\right)\text{d}x\right]\nonumber \\
 & -\left[\int\left(\vec{\epsilon}\cdot\nabla_{z}\right)V\left(z,x\right)\rho\left(t,z\right)\rho\left(t,x\right)\text{d}x-\int V\left(x,z\right)\left(\vec{\epsilon}\cdot\nabla_{z}\right)\rho\left(t,z\right)\rho\left(t,x\right)\text{d}x\right]\nonumber \\
 & -\beta\cdot\text{div}\left(A\,\left(\epsilon\cdot\nabla\right)\nabla A\right)\nonumber \\
 & +\partial_{\mu}\left(L^{A}\cdot\delta_{\mu\nu}\right)\cdot\epsilon^{\nu}\quad.
\end{align}
The last term can be put into the form 
\begin{equation}
\partial_{\mu}\left(L^{A}\cdot\delta_{\mu\nu}\right)\cdot\epsilon^{\nu}=\partial_{\mu}\left(\int\text{d}y\,L^{\text{nloc}}\left(z\right)\cdot\delta_{\mu\nu}\right)\cdot\epsilon^{\nu}+\partial_{\mu}\partial_{i}\left(A\partial^{i}A\cdot\delta_{\mu\nu}\epsilon^{\nu}\right)\quad,
\end{equation}
which is the $B=L^{\text{nloc}}\cdot X$ contribution to the energy
momentum tensor. Rearranging the terms of Eq. \eqref{eq:Local energy momentum}
as follows we get 
\begin{align}
\text{div}\left(T^{\phi}\cdot\epsilon+B\right) & =\left[\int\left(\vec{\epsilon}\cdot\nabla_{z}\right)V\left(z,x\right)\rho\left(t,z\right)\rho\left(t,x\right)\text{d}x-\int V\left(x,z\right)\rho\left(t,x\right)\left(\vec{\epsilon}\cdot\nabla_{z}\right)\rho\left(t,z\right)\text{d}x\right]\nonumber \\
 & +\left[\int V\left(x,z\right)\rho\left(t,z\right)\left(\epsilon_{t}\cdot\partial_{t}\right)\rho\left(t,x\right)\text{d}x-\int V\left(x,z\right)\rho\left(t,x\right)\left(\epsilon_{t}\cdot\partial_{t}\right)\rho\left(t,z\right)\text{d}x\right]\nonumber \\
 & +\beta\cdot\partial_{i}\left[\left(A\,\left(\epsilon^{\nu}\partial_{\nu}\right)\partial^{i}A\right)-\partial_{\mu}\left(A\partial^{i}A\cdot\delta_{\mu\nu}\epsilon^{\nu}\right)\right]\quad.
\end{align}
Compared to the generalised conservation law from the non-local theory,
Eq.\eqref{eq:Energy momentum with correction term}, stated here for
convenience

\begin{align}
\text{div}_{t,z}\left(T\cdot\epsilon\right) & =\int\text{d}x\,\left(\vec{\epsilon}\cdot\nabla_{z}\right)V\left(z-x\right)\rho\left(t,z\right)\rho\left(t,x\right)-V\left(x-z\right)\rho\left(t,x\right)\left(\vec{\epsilon}\cdot\nabla_{z}\right)\rho\left(t,z\right)\nonumber \\
 & +\int\text{d}x\,V\left(z-x\right)\rho\left(t,z\right)\left(\epsilon_{t}\cdot\partial_{t}\right)\rho\left(t,x\right)-V\left(x-z\right)\rho\left(t,x\right)\left(\epsilon_{t}\cdot\partial_{t}\right)\rho\left(t,z\right)\quad,
\end{align}
we see that the non-local correction comes from the auxiliary field
$A$ and differs from a total divergence by an additional term 
\begin{equation}
\beta\cdot\partial_{i}\left[\left(A\,\left(\epsilon^{\nu}\partial_{\nu}\right)\partial^{i}A\right)-\partial_{\mu}\left(A\partial^{i}A\cdot\delta_{\mu\nu}\epsilon^{\nu}\right)\right]\quad.
\end{equation}

\section{Ward identities for non-local theories}

In quantum field theory the symmetries of the classical action help
to simplify the relations between correlation functions for the quantum
fields. These simplified relations are called Ward identities. In
the following we will review (in a sketchy fashion) the derivation
of Ward identities in the functional integral formalism \cite{Nair:2004aa},
and show that the classical results we derived above for non-local
quantum field theories, imply relations between correlation functions
just as in local quantum field theories.

\
 The generating functional in Euclidean quantum field theories is
defined in terms of the classical but possibly renormalized action
$S$ as follows 
\begin{equation}
Z\left[J\right]=\int\mathcal{D}\phi\,e^{-S\left(\phi\right)+\int J\phi}\quad.
\end{equation}
Perform a coordinate transformation $\phi\mapsto\tilde{\phi}=\phi+Q$
where $Q$ is a characteristic of a symmetry vector field $X$ of
the classical action $S$. Since we integrate over all fields the
generating functional does not change and so we get 
\begin{align}
Z\left[J\right] & =\int\mathcal{D}\phi\,e^{-S\left(\phi\right)+\int J\phi}\nonumber \\
 & =\int\mathcal{D}\tilde{\phi}\,e^{-S\left(\tilde{\phi}\right)+\int J\tilde{\phi}}=\tilde{Z}\left[J\right]\quad.\label{eq:Ward identities first step}
\end{align}
Expressing $\tilde{Z}\left[J\right]$ in terms of the un-transformed
fields $\phi$ we get to the first order in $Q$ 
\begin{align}
\tilde{Z}\left[J\right] & =\int\mathcal{D}\phi\,\left[1+\text{Tr}\left(D_{V}Q\right)\right]\,e^{-S\left(\phi\right)+\int J\phi+\text{d}S\left(X\right)+\int JX_{Q}}\quad,
\end{align}
where the term $\text{Tr}\left(\CD{\FB}Q\right)$ comes from the transformation
of the functional measure. We also expanded the action to first order
as $S\left(\tilde{\phi}\right)\approx S\left(\phi\right)+\text{d}S\left(X_{Q}\right)$.
Expanding the exponential to first order of $Q$ yields 
\begin{equation}
\tilde{Z}\left[J\right]=\int\mathcal{D}\phi\,e^{-S\left(\phi\right)+\int J\phi}\left[1+\text{Tr}\left(\CD{\FB}Q\right)-\text{d}S\left(X_{Q}\right)+\int JQ+\mathcal{O}\left(Q^{2}\right)\right]=Z\left[J\right]+\delta Z_{J}\left[X_{Q}\right]+\mathcal{O}\left(Q^{2}\right)\quad,
\end{equation}
where we define 
\begin{equation}
\delta Z_{J}\left[X_{Q}\right]=\langle\text{Tr}\left(\CD{\FB}Q\right)-\text{d}S\left(X_{Q}\right)+\int JQ\rangle\quad.\label{eq:General ward type identitiy}
\end{equation}
Using the notion of a functional average 
\begin{equation}
\langle\circ\rangle=\int\mathcal{D}\phi\,\left[e^{-S\left(\phi\right)+\int J\phi}\,\circ\quad\right]\quad.
\end{equation}
Equations \eqref{eq:Ward identities first step} implies 
\begin{equation}
\delta Z_{J}\left[X_{Q}\right]=0\quad.
\end{equation}
Due to lemma \eqref{lem:The-symmetry-condition in Frechet style},
which stated $\text{d}S\left(X_{Q}\right)=-\int_{\Omega}\text{div}\left(L\cdot X_{\BM}\right)$,
we obtain 
\begin{equation}
\delta Z_{J}\left[X_{Q}\right]=\langle\text{Tr}\left(\CD{\FB}Q\right)+\int_{\Omega}\text{div}\left(L\cdot X_{\BM}\right)+\int JQ\rangle=0\quad.\label{eq:Ward identity}
\end{equation}
The \textit{anomaly} term $\text{tr}\left(\CD{\FB}Q\right)$ comes
from the transformation of the measure. If it does not vanish, the
symmetries of the classical action do not imply a symmetry of the
quantum effective action, and the symmetry is thus broken at the quantum
level.

Equation \eqref{eq:Ward identity} defines the Ward identities.

Expanding the generating functional in powers of $J$ Eq. \eqref{eq:Ward identity}
leads to the relation between coefficients of different powers of
$J$, which correspond to different correlation functions.

So far, this statement is global, by which we mean that it depends
on the whole domain $\Omega$. However, we can use our approach to
the theorem \eqref{thm:Main result} to obtain a semi-local statement.

To do this, we multiply the characteristic $Q$ of the symmetry vector
field $X_{Q}$ by an arbitrary smooth function $\eta$ that vanishes
on the boundaries of $U$. Then, we perform the variation of the action
in the direction of $X_{Q\eta}$. Since the vector field $\imath_{N}\left(X_{Q\eta}\right)$
vanishes on the boundaries of $U$ the variation leads to the integral
form of the Euler-Lagrange equations, 
\begin{equation}
\text{d}S\left(X_{Q\eta}\right)=\sum_{i=1}^{N}\int_{\Omega}\,E_{i}\left[X_{Q}\right]\eta^{i}\,\VOL\quad,
\end{equation}
where $\eta^{i}$ denotes the smooth function $\eta$ defined over
the $M^{i}=\text{pr}^{i}\left(f\left(\BM\right)\right)$. Using Eq.\eqref{eq:Single Frechet derivative}
this equation can be rewritten in terms of the Fréchet derivative
as 
\begin{equation}
\text{d}S\left(X_{Q\eta}\right)=\sum_{i=1}\int_{\Omega}\left[D_{L}\left(Q^{i}\right)\eta^{i}-\text{div}\left(A^{i}\right)\eta^{i}\right]\quad.
\end{equation}
Due to lemma \eqref{lem:The-symmetry-condition in Frechet style},
which related the symmetries and Fréchet derivatives, we can instead
write 
\begin{equation}
\text{d}S\left(X_{Q\eta}\right)=\int_{\Omega}\left[\sum_{i=1}D_{L}\left(Q^{i}\right)\left(\eta^{i}-\eta^{\alpha}\right)\right]-\int_{U}\text{div}\left(B\right)\eta^{\alpha}-\sum_{i=1}^{N}\int_{U}\text{div}\left(A^{i}\right)\eta^{i}\quad.
\end{equation}
By the definition of the non-local correction term $\Delta^{\alpha}\left(z\right)=\int_{\Omega}\sum_{i=1}^{N}D_{L}\left(Q^{i}\right)\left[\delta^{i}-\delta^{\alpha}\right]\left(z\right)$
we get 
\begin{equation}
\text{d}S\left(X_{Q\eta}\right)=\int_{U}\left\{ \Delta^{\alpha}\left(z\right)-\sum_{i=1}^{N}\text{div}_{\mathcal{D}^{i}}\left(A^{i}\right)\delta^{i}\left(z\right)-\int_{U}\text{div}_{M^{P}}\left(B\right)\delta^{\alpha}\left(z\right)\right\} \eta\left(z\right)\quad,\label{eq:Variation of the action in almost a symmetry}
\end{equation}
which is a generalised version of the local equation $\text{d}S=\int\,\text{div}\left(J_{\text{phys}}\right)\cdot\eta$.
Inserting this equation into Eq. \eqref{eq:General ward type identitiy},
we obtain 
\begin{equation}
\int_{U}\eta\left(z\right)\langle\text{Tr}\left(\CD{\FB}Q\right)\left(z\right)-\text{GCL}\left(z\right)+JQ\left(z\right)\rangle=0\quad,
\end{equation}
where GCL stays for the Generalised Conservation Law. Since the above
equation holds for all smooth functions $\eta$ on $U$. This leads
to the modified local Ward identities 
\begin{equation}
\langle\text{Tr}\left(\CD{\FB}Q\right)\left(z\right)-\text{GCL}\left(z\right)+JQ\left(z\right)\rangle=0\quad.
\end{equation}

\section{Conclusions}

In this paper we have proposed a geometrical treatment of symmetries
in non-local field theories, where the non-locality is due to a lack
of identification of field arguments in the action. We have shown
that the existence of a symmetry of the action leads to a generalised
conservation law, in which the usual conserved current acquires an
additional non-local correction term, and thus obtained a generalization
of the standard Noether theorem to such more exotic cases. We have
illustrated the general formalism by discussing a specific physical
example, in which this correction term can be interpreted as a dynamical
external mean potential coming from the structure of the non-local
field interaction.

Our analysis was focused on cases in which the geometrical Lagrangian
is a function on the first jet bundle but we believe that an extension
to higher jet bundles can be carried over from local theories without
any complications. On the other hand, it is not clear whether the
generalization to non-geometrical symmetries can be carried out in
the same way as it is done in the local case. The major difficulty
in this direction is the fact, that the total divergence on $\BM$
can, in general, change the equations of motion leading to non-equivalent
Lagrangians. A solution to this problem could lead to a full classification
of (local) symmetries for non-local field theories.

We hope that our analysis and results will bring new insights in different
areas of theoretical physics in which non-local field theories play
a role, including condensed matter physics and cosmology. However,
our main interest is the application of the above analysis to non-local
models of quantum gravity, especially in the framework of group field
theories (and thus, indirectly, of spin foam models and loop quantum
gravity). The detailed analysis of known symmetries in group field
theories and of their corresponding conservation laws is the subject
of a follow-up paper. However, the interest of such analysis can already
be envisaged. In particular, it will be important to develop the symmetry
analysis of such models \cite{josephNoether,GFTdiffeos} in a more
systematic way and to use it to shed more light into their quantum
geometric properties. The consequences of existing symmetries, encoded
in the generalised Ward identities \cite{josephWard,LOR,LOR2}, will
also play an important role in the analysis of group field theory
renormalization \cite{GFTrenorm,GFTrenorm2,GFTrenorm3,GFTrenorm4,GFTrenorm5,GFTrenorm6,GFTrenorm7,GFTrenorm8,GFTrenorm9,GFTrenorm10,GFTrenorm11}.
Finally, we expect symmetries and generalised conservation laws to
be crucial for the further development and physical analysis of the
effective cosmological dynamics extracted from group field theory
models of quantum gravity \cite{GFTcosmo}.

\section*{Acknowledgements}

We are very grateful to the members of the quantum gravity group at
the AEI, and especially to Joseph Ben Geloun for several discussions
and useful comments on this work.

\end{document}